\DeclareMathOperator*{\argmax}{arg\,max}
\DeclareMathOperator*{\argmin}{arg\,min}
\acrodef{MergeDTS}{Merge Double Thompson Sampling}
\acrodef{MergeRUCB}{Merge Relative Upper Confidence Bound}
\acrodef{UCB}{Upper Confidence Bound}
\acrodef{LCB}{Lower Confidence Bound}
\acrodef{TS}{Thompson Sampling}
\acrodef{DTS}{Double Thompson Sampling}
\acrodef{SBM}{Singleton Bandits Machines}
\acrodef{RUCB}{Relative Upper Confidence Bound}
\acrodef{MAB}{multi-armed bandit}
\acrodef{MDB}{Multi-Dueling Bandit}
\acrodef{DCM}{dependent click model}
\acrodef{RMED1}{Relative Minimum Empirical Divergence}
\acrodef{DMED}{Deterministic Minimum Empirical Divergence}
\acrodef{PI}{Probabilistic Interleaving}
\acrodef{SOSM}{Sample-Only Scored Multileave}
\acrodef{REX3}{Relative Exponential-weight algorithm for Exploration and Exploitation}
\acrodef{EXP3}{Exponential-weight algorithm for Exploration and Exploitation}
\author{Chang Li}
\affiliation{%
	\institution{University of Amsterdam}
	\streetaddress{Science Park 904}
	\postcode{1098 XH}
	\city{Amsterdam}
	\country{The Netherlands}}
\email{c.li@uva.nl}
\author{Ilya Markov}
\affiliation{%
	\institution{University of Amsterdam}
	\streetaddress{Science Park 904}
	\postcode{1098 XH}
	\city{Amsterdam}
	\country{The Netherlands}}
\email{i.markov@uva.nl}
\author{Maarten de Rijke}
\affiliation{%
	\institution{University of Amsterdam}
	\streetaddress{Science Park 904}
	\postcode{1098 XH} 
	\city{Amsterdam}
	\country{The Netherlands}}
\email{m.derijke@uva.nl}
\affiliation{%
	\institution{Ahold Delhaize}
	\streetaddress{Provincialeweg 11}
	\postcode{1506 MA}
	\city{Zaandam}
	\country{The Netherlands}
}	
\author{Masrour Zoghi}
\affiliation{%
	\institution{Microsoft}
	\city{Redmond}
	\state{WA} 
	\country{USA}
 }
\email{masrour@zoghi.org}
\thanks{This research was partially supported by
the Innovation Center for Artificial Intelligence (ICAI),
and
the Netherlands Organisation for Scientific Research (NWO)
under pro\-ject nr
612.\-001.\-551. 
All content represents the opinion of the authors, which is not necessarily shared or endorsed by their respective employers and/or sponsors.}
\title{MergeDTS: A Method for Effective Large-Scale Online Ranker Evaluation}
\begin{document}

\begin{abstract}
Online ranker evaluation is one of the key challenges in information retrieval. 
While the preferences of rankers can be inferred by interleaving methods, the problem of how to effectively choose the ranker pair that generates the interleaved list without degrading the user experience too much is still challenging. 
On the one hand, if two rankers have not been compared enough, the inferred preference can be noisy and inaccurate. 
On the other, if two rankers are compared too many times, the interleaving process inevitably hurts the user experience too much.  
This dilemma is known as the \emph{exploration versus exploitation} tradeoff. 
It is captured by the $K$-armed dueling bandit problem, which is a variant of the $K$-armed bandit problem,  where the feedback comes in the form of pairwise preferences. 
Today's deployed search systems can evaluate a large number of rankers concurrently, and scaling effectively in the presence of numerous rankers is a critical aspect of  $K$-armed dueling bandit problems.

In this paper, we focus on solving the large-scale online ranker evaluation problem under the so-called Condorcet assumption, where there exists an optimal ranker that is preferred to all other rankers. 
We propose \ac{MergeDTS}, which first utilizes a divide-and-conquer strategy that localizes the comparisons carried out by the algorithm to small batches of rankers, and then employs  \ac{TS} to reduce the comparisons between suboptimal rankers inside these small batches.  
The effectiveness (regret) and efficiency (time complexity) of \ac{MergeDTS} are extensively evaluated using examples from the domain of online evaluation for web search. 
Our main finding is that for large-scale Condorcet ranker evaluation problems, \ac{MergeDTS} outperforms the state-of-the-art dueling bandit algorithms.	
\end{abstract}

%
%
\begin{CCSXML}
	<ccs2012>
	<concept>
	<concept_id>10002951.10003317.10003359</concept_id>
	<concept_desc>Information systems~Evaluation of retrieval results</concept_desc>
	<concept_significance>500</concept_significance>
	</concept>
	</ccs2012>
\end{CCSXML}

\ccsdesc[500]{Information systems~Evaluation of retrieval results}

\keywords{Online evaluation, Implicit feedback, Preference learning, Dueling bandits}

\maketitle
\acresetall


\section{Introduction}
\label{sec:intro}

Online ranker evaluation concerns the task of determining the ranker with the best performance out of a finite set of rankers.	
It is an important challenge for information retrieval systems~\citep{markov-2018-what,Ibrahim:2016:CPL:2954381.2866571,Moffat:2017:IUE:3026478.3052768}. 
In the absence of an oracle judge who can tell the preferences between all rankers, the best ranker is usually inferred from user feedback on the result lists produced by the rankers~\cite{hofmann-online-2016}. 
Since user feedback is known to be noisy \cite{joachims03:evaluating,Joachims:2007:EAI:1229179.1229181,Nelissen:2018:STU:3211967.3185153,Goldberg:2018:IDS:3211967.3186195}, how to infer ranker quality and when to stop evaluating a ranker are two important challenges in online ranker evaluation.

The former challenge, i.e., how to infer the quality of a ranker, is normally addressed by \emph{interleaving} methods~\cite{chapelle2012large,hofmann2011probabilistic,interleave,hofmann-pi-2012,Chuklin:2015:CAI:2737813.2668120}. 
Specifically, an interleaving method interleaves the result lists generated by two rankers for a given query and presents the interleaved list to the user.  
Then it infers the preferred ranker based on the user's click feedback. 
As click feedback is noisy, the interleaved comparison of two rankers has to be repeated many times so as to arrive at a reliable outcome of the comparison.

Although interleaving methods address the first challenge of online ranker evaluation (how to infer the quality of a ranker), they give rise to another challenge, i.e., which rankers to compare and when to stop the comparisons.  
Without enough comparisons, we may mistakingly infer the wrong  ranker preferences. But with too many comparisons we may degrade the user experience as we continue to show results from sub-optimal rankers. 
Based on previous work~\cite{,rcs,mergerucb,mdb}, the challenge of choosing and comparing rankers can be formalized as a $K$-armed dueling bandit problem~\cite{kdb}, which is an important variant of the \ac{MAB} problem, where feedback is given in the form of pairwise preferences. 
In the $K$-armed dueling bandit problem, a ranker is defined as an arm and the best ranker is the arm that has the highest expectation to win the interleaving game against other candidates. 

A number of dueling bandit algorithms have been proposed; cf.~\cite{masrour,DBsurvey,DBsurvey2} for an overview. 
However, the study of these algorithms has mostly been limited to small-scale dueling bandit problems, with the state-of-the-art being \ac{DTS}~\cite{dts}. By ``small-scale'' we mean that the number of arms being compared is small. 
But, in real-world online ranker evaluation problems, experiments involving hundreds or even thousands of rankers are commonplace~\cite{kohavi2013online}.
Despite this fact, to the best of our knowledge, the only work that address this particular scalability issue is \ac{MergeRUCB}~\cite{mergerucb}. 
As we demonstrate in this paper, the performance of \ac{MergeRUCB} can be improved upon substantially.

In this paper, we propose and evaluate a novel algorithm, named \ac{MergeDTS}. 
The main idea of \ac{MergeDTS} is to combine the benefits of \ac{MergeRUCB}, which is the state-of-the-art algorithm for large-scale dueling bandit problems, and the benefits of \ac{DTS}, which is the state-of-the-art algorithm for small-scale problems, and attain improvements in terms of effectiveness (as measured in terms of regret) and efficiency (as measured in terms of time complexity). 
More specifically, what we borrow from \ac{MergeRUCB} is the divide and conquer idea used to group rankers into small batches to avoid global comparisons. 
On the other hand, from \ac{DTS} we import the idea of using  Thompson Sampling \cite{tst}, rather than using uniform randomness as in \ac{MergeRUCB}, to choose the arms to be played. 

We analyze the performance of \ac{MergeDTS}, and demonstrate that the soundness of  \ac{MergeDTS} can be guaranteed if the time step $T$ is known and the exploration parameter $\alpha > 0.5$ (Theorem~\ref{th:bound}).
Finally, we conduct extensive experiments to evaluate the performance of \ac{MergeDTS} in the scenario of online ranker evaluation on three widely used real-world datasets: 
Microsoft, Yahoo!~Learning to Rank, and ISTELLA~\cite{mslr,chapelle2011yahoo,Deveaud:2018:LAR:3289475.3231937,istella}. 
We show that with tuned parameters \ac{MergeDTS} outperforms \ac{MergeRUCB} and \ac{DTS} in large-scale online ranker evaluation under the Condorcet assumption, i.e., where there is a ranker preferred to all other rankers.\footnote{
	Our theoretical analysis is rather conservative and the regret bound only holds for the parameter values within a certain range. This is because our bound is proven using Chernoff-Hoeffding bound~\cite{Hoeffding:1963} together with the union bound~\cite{casella2002statistical}, both of which, in our case, introduce gaps between theory and practice. 
	In our experiments, we show that the parameter values outside of the theoretical regime can boost up the performance of \ac{MergeDTS} as well as that of the baselines.
	Thus, our experimental results of \ac{MergeDTS} are not restricted to the parameter values within the theoretical regime.
}
Moreover, we demonstrate the potential of using \ac{MergeDTS} beyond the Condorcet assumption, i.e., where there might be multiple best rankers.

In summary, the main contributions of this paper are as follows:
\begin{enumerate}
	\item 
		We propose a novel $K$-armed dueling bandits algorithm for large-scale online ranker evaluation, called \ac{MergeDTS}. 
		We use the idea of divide and conquer together with Thompson sampling to reduce the number of comparisons of arms. 
	\item 
		We analyze the performance of \ac{MergeDTS} and theoretically demonstrate that the soundness of \ac{MergeDTS} can be guaranteed in the case of known time horizon and parameter values in the theoretical regime. 
	\item 
		We evaluate \ac{MergeDTS} experimentally on the Microsoft, Yahoo! Learning to Rank and ISTELLA datasets, and show that, with the tuned parameters,
		\ac{MergeDTS} outperforms baselines in most of the large-scale online ranker evaluation configurations. 
\end{enumerate}

\noindent
The rest of the paper is organized as follows. 
In Section~\ref{sec:problem-setting}, we detail the definition of the dueling bandit problem. 
We discuss prior work in Section~\ref{sec:relatedwork}. 
MergeDTS is proposed in Section~\ref{sec:mergedts}. 
Our experimental setup is detailed in Section~\ref{sec:exp_setup} and the results are presented in Section~\ref{section:results}.
We conclude in Section~\ref{sec:conclusion}. 

\section{Problem Setting}
\label{sec:problem-setting}

In this section, we first describe in more precise terms the $K$-armed dueling bandit problem, which is a variation of the \acf{MAB} problem. 
The latter can be described as follows: 
given $K$ choices, called ``arms'' and denoted by $a_1,\ldots,a_K$, we are required to choose one arm at each step;
choosing arm $a_i$ generates a reward which is drawn i.i.d.\ from a random variable with mean, denoted by $\mu_i$, and our goal is to maximize the expected total reward accumulated by our choices of arms over time.
This objective is more commonly formulated in terms of the \emph{cumulative regret} of the \ac{MAB} algorithm, where regret at step $t$ is the difference between the reward of the chosen arm, e.g., $a_j$, and the reward of the best arm, e.g., $a_k$, in hindsight, and the average regret of arm $a_j$ is defined to be 
$\mu_k-\mu_j$:
cumulative regret is defined to be the sum of the instantaneous regret over time~\cite{ucb,li-2019-bubblerank}. 

The dueling bandit problem differs from the above setting in that at each step we can  choose up to two arms, $a_i$ and $a_j$ ($a_i$ and $a_j$ can be the same); the feedback is either $a_i$ or $a_j$, as the winner of the comparison between the two arms (rather than an absolute reward), where $a_i$ is chosen as the winner with \emph{preference probability} $p_{ij}$ and $a_j$ with probability $p_{ji}=1-p_{ij}$. 
These probabilities form the entries of a $K\times K$ \emph{preference matrix} $\mathbf{P}$, which defines the dueling bandit problem but is not revealed to the dueling bandit algorithm.

In a similar fashion to the \ac{MAB} setting, we evaluate a dueling bandit algorithm based on its \textit{cumulative regret}, which is the total regret incurred by choosing suboptimal arms comparing to the best arm over time~\cite{DBsurvey,masrour}.
However, the definition of regret is less clear-cut in the dueling bandit setting, due to the fact that our dueling bandit problem might not contain a clear winner that is preferred to all other arms, i.e., an arm $a_C$, called the \emph{Condorcet winner}, such that $p_{Cj} > 0.5$ for all $j \neq C$. 
There are numerous proposals in the literature for alternative notions of winners in the absence of a Condorcet winner, e.g., Borda winner \cite{savage,sdb}, Copeland winner \cite{ccb,cwrmed}, von Neumann winner \cite{cdb}, with each definition having its own disadvantages as well as practical settings where its use is appropriate.

MergeDTS, like most of the other dueling bandits algorithms~\cite{DBsurvey2, rcs,rucb,mergerucb,savage}, relies on the existence of a  Condorcet winner, 
in which case the Condorcet winner is the clear choice for the best arm, since it is preferred to all other arms, and with respect to which regret can be defined.
We pose, as an interesting direction for future work, the task of extending the method proposed in this paper to each of the other notions of winner listed above.

In order to simplify the notation in the rest of the paper, we re-label the arms such that $a_1$ is the Condorcet winner, although this is not revealed to the algorithm.
We define the \emph{regret} incurred by comparing $a_i$ and $a_j$ at time $t$ to be  
\begin{equation}
r_t = (\Delta_{1i} + \Delta_{1j})/2 ,
\end{equation} 
where $\Delta_{1k} := p_{1k} - 0.5$ for each $k$. 
Moreover, the \emph{cumulative regret} after $T$ steps is defined to be 
\begin{equation}
\mathcal{R}(T) = \sum_{t=1}^{T} r_t, 
\end{equation}
where $r_t$ is the regret incurred by our choice of arms at time $t$.

Let us translate the online ranker evaluation problem into  the dueling bandit problem.
The input, a finite set of arms, consists of a set of rankers, e.g., based on different ranking models or based on the same model but with different parameters~\cite{kohavi2013online}. 
The Condorcet winner is the ranker that is preferred, by the majority of users, over suboptimal rankers.  
More specifically, a result list from the Condorcet winner is expected to receive the highest number of clicks from users when compared to a list from a suboptimal ranker.  
The preference matrix $\mathbf{P}$ records the users'  relative preferences for all rankers. 
Regret measures the user frustration incurred by showing the interleaved list from suboptimal rankers instead of the Condorcet winner.
In the rest of the paper,  we use the  term  \emph{ranker} to indicate the term \emph{arm} in $K$-armed dueling bandit problems since we focus on the online ranker evaluation task. 

\section{Related Work}
\label{sec:relatedwork}

There are two main existing approaches for solving  dueling bandit problems: 
\begin{inparaenum}
\item reducing the problem to a \ac{MAB} problem, e.g., Sparring~\citep{sparring}, Self-Sparring~\cite{selfsparring} and \ac{REX3}~\cite{gajane2015relative};
\item generalizing existing MAB algorithms to the dueling bandit setting, e.g., \ac{RUCB}~\citep{rucb}, \ac{RMED1}~\cite{rmed} and \ac{DTS}~\cite{dts}.
\end{inparaenum}
The advantage of the latter group of algorithms is that they come equipped with theoretical guarantees, proven for a broad class of problems.
The first group, however, have guarantees that either only hold for a restricted class of problems, where the dueling bandit problem is obtained by comparing the arms of an underlying MAB problem (a.k.a. utility-based dueling bandits), as in the case of Self-Sparring,  \ac{REX3} and Sparring T-INF~\cite{zimmert19a-optimal}, or have substantially suboptimal instance-dependent regret bounds as in the case of Sparring EXP3, which has a regret bound of the form $O(\sqrt{KT})$, as opposed to $O(K\log T)$.

Indeed, as our experimental results below demonstrate, Sparring-type algorithms can perform poorly when the dueling bandit problem does not arise from a \ac{MAB} problem. 

Below, we describe some of these algorithms to provide context for our work. 
Sparring~\citep{sparring} uses two \ac{MAB} algorithms, e.g., \ac{UCB}, to choose rankers.  
At each step, Sparring asks each \ac{MAB} algorithm to output a ranker to be compared. 
The two rankers are then compared and the \ac{MAB} algorithm that proposed the winning ranker gets a reward of $1$ and the other a reward of $0$.

Self-Sparring~\citep{selfsparring} improves upon Sparring by employing a single \ac{MAB} algorithm, but at each step samples twice to choose rankers.
More precisely, \citet{selfsparring} use \acf{TS} as the MAB algorithm.  
Self-Sparring assumes that the problem it solves arises from an MAB; it can perform poorly when there exists a cycle relation in rankers, i.e., if there are rankers $a_i$, $a_j $ and $a_k$ with $p_{ij} > 0.5$, $p_{jk}>0.5$ and $p_{ki} > 0.5$. 
As Self-Sparring does not estimate confidence intervals of the comparison results, it does not eliminate rankers.

Another extension of Sparring is \ac{REX3}~\cite{gajane2015relative}, which is designed for the adversarial setting.
\ac{REX3} is inspired by the \ac{EXP3}~\cite{Auer:exp3}, an algorithm for adversarial bandits, and has a regret bound of the form $O( \sqrt{K \ln{(K)}T)}$. 
Note that the regret bound grows as the square-root of time-steps, but sublinearly in the number of rankers, which shows the potential for improvement in the case of large-scale problems. 

\acf{RUCB}~\citep{rucb} extends \ac{UCB} to dueling bandits using a matrix of optimistic estimates of the relative preference probabilities. 
At~each step, \ac{RUCB} chooses the first ranker to be one that beats all other rankers based on the idea of \emph{optimism in the face of uncertainty}. 
Then it chooses the second ranker to be the ranker that beats the first ranker with the same idea of optimism in the face of uncertainty, which translates to pessimism for the first ranker. 
The cumulative regret of \ac{RUCB} after $T$ steps is upper bounded by an expression of the form $O(K^2 + K \log T)$. 

\ac{RMED1}~\cite{rmed} extends an asymptotically optimal \ac{MAB} algorithm, called \ac{DMED} \cite{honda2010asymptotically}, by first proving an asymptotic lower bound on the cumulative regret of all dueling bandit algorithms, which has the order of $\Omega(K \log T)$, and pulling each pair of rankers the minimum number of times prescribed by the lower bound.
\ac{RMED1} outperforms \ac{RUCB} and Sparring. 

\acf{DTS}~\citep{dts} improves upon \ac{RUCB} by using \ac{TS} to break ties when choosing the first ranker. 
Specifically, it uses one \ac{TS} to choose the first ranker from a set of candidates that are pre-chosen by UCB. 
Then it uses another \ac{TS} to choose the second ranker that performs the best compared to the first one. 
The cumulative regret of \ac{DTS} is upper bounded by $O(K \log T +  K^2 \log \log T)$.
{Note that the bound of \ac{DTS} is higher than that of \ac{RUCB}. We hypothesize that this is because the bound of  \ac{DTS} is rather loose.} 
\ac{DTS} outperforms other dueling bandits algorithms empirically and is the state-of-the-art in the case of small-scale dueling bandit problems~\citep{dts,selfsparring}. 
As discussed in Section~\ref{sec:exp_setup}, for computational reasons \ac{DTS} is not suitable for large-scale problems. 

The work that is the closest to ours is by~\citet{mergerucb}.
They propose \ac{MergeRUCB}, which is the state-of-the-art for large-scale dueling bandit problems.
\ac{MergeRUCB} partitions rankers into small batches and compares rankers within each batch. 
A ranker is eliminated from a batch once we realize that even according to the most optimistic estimate of the preference probabilities it loses to another ranker in the batch. 
Once enough rankers have been eliminated, \ac{MergeRUCB} repartitions the remaining rankers and continues as before. 
Importantly, \ac{MergeRUCB} does not require global pairwise comparisons between all pairs of rankers, and so it reduces the computational complexity and increases the time efficiency, as shown in Section~\ref{sec:computational}.
The cumulative regret of \ac{MergeRUCB} can be upper bounded by $O(K \log T)$~\cite{mergerucb}, i.e., with no quadratic dependence on the number of rankers.  
This upper bound has the same order as the lower bound proposed by \citet{rmed} in terms of $K \log{T}$, but it is not optimal in the sense that it has large constant coefficients. 
As we demonstrate in our experiments,  
 \ac{MergeRUCB} can be improved by making use of \ac{TS} to reduce the amount of randomness in the choice of rankers. 
More precisely, the cumulative regret of \ac{MergeRUCB} is almost twice as large as that of \ac{MergeDTS} in the large-scale setup shown in Section~\ref{section:results}. 

A recent extension of dueling bandits is called \emph{multi-dueling bandits}~\cite{mdb,selfsparring,battleduel}, where more than two rankers can be compared at each step. 
\ac{MDB} is the first proposed algorithm in this setting, which is specifically designed for online ranker evaluation. 
It maintains two UCB estimators for each pair of rankers, a looser confidence bound and a tighter one. 
At each step, if there is more than one ranker that is valid for the tighter UCB estimators, \ac{MDB} compares all the rankers that are valid for the looser UCB estimators.   
\ac{MDB} is outperformed by Self-Sparring, the state-of-the-art \emph{multi-dueling bandit} algorithm,  significantly~\cite{selfsparring}. 
In this paper, we do not focus on the \emph{multi-dueling bandit} setup.
The reasons are two-fold.
First, to the best of our knowledge, there are no theoretical results in the multi-dueling setting that allow for the presence of cyclical preference relationships among the rankers. 
Second, \citet{battleduel} state that ``(perhaps surprisingly) [\ldots] the flexibility of playing size-$k$ subsets does not really help to gather information faster than the corresponding dueling case ($k=2$),  at least for the current subset-wise feedback choice model.''
This statement demonstrates that there is no clear advantage to using multi-dueling comparisons over pairwise dueling comparisons at this moments.


\algrenewcommand{\algorithmiccomment}[1]{// #1}
\algnewcommand\algorithmicinput{\textbf{Input:}}
\algnewcommand\Input{\item[\algorithmicinput]}
\algnewcommand\algorithmicoutput{\textbf{Output:}}
\algnewcommand\Output{\item[\algorithmicoutput]}

\section{Merge Double Thompson Sampling}
\label{sec:mergedts}
In this section, we describe the proposed algorithm, \ac{MergeDTS}, and explain the main intuition behind it.
Then, we provide theoretical guarantees bounding the regret of \ac{MergeDTS}. 

\subsection{The \ac{MergeDTS} algorithm}

Here we describe MergeDTS, Merge Double Thompson Sampling, which combines the benefits of both the elimination-based divide and conquer strategy of MergeRUCB and the sampling strategy of DTS, producing an effective scalable dueling bandit algorithm.

The pseudo-code for \ac{MergeDTS} is provided in Algorithms~\ref{alg:mergedts}--\ref{alg:phase2}, 
with the notation summarized in Table~\ref{tb:notation} for the reader's convenience. 
The input parameters are the exploration parameter $\alpha$, the size of a batch $M$ and the  failure probability $\epsilon \in (0, 1)$. 
The algorithm records the outcomes of the past comparisons in matrix $\mathbf{W}$, whose element $w_{ij}$ is the number of times ranker $a_i$ has beaten ranker $a_j$ so far.
\ac{MergeDTS} stops when only one ranker remains and then returns that ranker, which it claims to be the Condorcet winner.\footnote{In the online ranker evaluation application, we can stop \ac{MergeDTS} once it finds the best ranker. However, in our experiments, we keep \ac{MergeDTS} running by comparing the remaining ranker with itself. If the remaining ranker is the Condorcet winner, there will be no regret. }

\begin{table}
	\caption{Notation used in this paper.}
	\label{tb:notation}
	\begin{tabular}{p{0.15\columnwidth} p{0.75\columnwidth}}
		\toprule	
		\bf Notation & \bf Description \\
		\midrule
		$K$ & Number of rankers \\
		$a_i$ & The $i$-th ranker\\
		$p_{ij}$ & Probability of $a_i$ beating $a_j$ \\
		$M$ & Size of a batch \\ 
		$\alpha$ & Exploration parameter, $\alpha > 0.5$ \\
		$\epsilon$ & Probability of failure \\
		$\mathbf{W}$ & The comparison matrix \\
		$w_{ij}$ & Number of times $a_i$ has beaten $a_j$ \\
		$s$ & Stage of the algorithm \\
		$\mathcal{B}_s$ & Set of batches at the $s$-th stage \\
		$b_s$ & Number of batches in $\mathcal{B}_s$ \\
		$\theta_{ij}$  & Sampled probability of $a_i$ beating $a_j$ \\
		$a_c$ & Ranker chosen in Phase~I of MergeDTS \\
		$\phi_{i}$ & Sampled probability of $a_i$ beating $a_c$ \\
		$a_d$ & Ranker chosen in Phase~II of MergeDTS \\
		$u_{ij}$ & Upper confidence bound (UCB): $\frac{w_{ij}}{w_{ij} + w_{ji}} + \sqrt{\frac{\alpha \log{(t+C(\epsilon))}}{w_{ij}+w_{ji}}}$ \\
		$\Delta_{ij}$ & $|p_{ij} - 0.5|$ \\
		$\Delta_{\min}$ & $\min_{\Delta_{ij} > 0} \Delta_{ij}$ \\
		$\Delta_{B, min}$ & $ \min_{a_i, a_j \in B ~and ~i\neq j}\Delta_{ij}$\\
		$C(\epsilon)$ & $ \left(\frac{(4\alpha -1)K^2}{(2\alpha-1)\epsilon}\right) ^{\frac{1}{2\alpha-1}} $  \\
		\bottomrule
	\end{tabular}
\end{table}

\begin{algorithm}[!t]
	\begin{algorithmic}[1]
		\Input $K$ rankers $a_1, a_2, \ldots, a_K$; partition size $M$; exploration parameter $\alpha>0.5$; running time steps $T$;  probability of failure $\epsilon = 1/T$.
		\Output The Condorcet winner. 
		\State $\mathbf{W} \leftarrow \mathbf{0}_{K,K}$ \hfill \Comment \emph{The comparison matrix}
		\State  \label{alg:C} $C(\epsilon) = \left(\frac{(4\alpha-1)K^2}{(2\alpha-1)\epsilon}\right)^{\frac{1}{2\alpha-1}}$
		\State $s=1$ \hfill \Comment \emph{The stage of the algorithm}
		\State $\mathcal{B}_s = \big\{\underbrace{[a_1, \ldots, a_M]}_{B_1}, \ldots, \underbrace{[a_{(b_1-1)M+1}, \ldots, a_K]}_{B_{b_1}}\big\}$ \label{alg:group_batches}
		\hfill \Comment \emph{Disjoint batches of rankers, with $b_1 = \lceil\frac{K}{M}\rceil $}\nonumber
		
		\For{$t = 1,2,\ldots T $} 
			\State $m = t \mod b_s$ \hfill \Comment \emph{Index of the batches}
			
			\If{$b_s = 1$ and $|B_m|=1$}\label{alg:oneranker} \hfill \Comment One ranker left
			\State\label{alg:bestranker} Return the remaining ranker $a\in B_m$.
			\EndIf			
			\State $\mathbf{U} = \frac{\mathbf{W}}{\mathbf{W} + \mathbf{W}^{T}}+ \sqrt{\left(\frac{\alpha \log(t + C(\epsilon))}{\mathbf{W} + \mathbf{W}^{T}}\right)}$ \label{alg:ubound}
			\hfill \Comment \emph{UCB estimators: operations are element-wise and $\frac{x}{0} := 1$}\nonumber
			
			\State Remove $a_i$ from $B_m$ if $u_{ij} < 0.5$ for any $a_j \in B_m$.
			\label{alg:remove}
			\If{$b_s > 1$ and $|B_m|=1$} 
				\State Merge $B_m$ with the next batch and decrement $b_s$.%
				\label{alg:merge}
			\EndIf
			
			\hspace{-2mm}\Comment \textbf{Phase~I}: Choose the first candidate $a_c$\nonumber
			\State $a_c = $ SampleTournament($\mathbf{W}$, $B_m$) \hfill \Comment \emph{See Algorithm~\ref{alg:phase1}}\label{alg:first_tournament}

			\hspace{-2mm}\Comment \textbf{Phase~II}: Choose the second candidate $a_d$
			\State $a_d$ = RelativeTournament($\mathbf{W}$, $B_m$, $a_c$) \hfill \Comment \emph{See Algorithm~\ref{alg:phase2}}
			
			\hspace{-2mm}\Comment \textbf{Phase~III}: Compare candidates and update batches
			\State Compare pair $(a_c, a_d)$ and increment $w_{cd}$ if $a_c$ wins otherwise increment $w_{dc} $.
			\label{alg:update}

			\hspace{-2mm}\Comment \textbf{Phase~IV}: Update batch set
			\If{$\sum_m|B_m| \leq \frac{K}{2^s}$} \label{alg:update_batch}
			\State Pair the larger size batches with the smaller ones, making sure the size of every batch is in $[0.5M, 1.5M]$.
			\State $s = s+1$
			\State Update $\mathcal{B}_s$, $b_s = |\mathcal{B}_s|$.
			\EndIf\label{alg:update_batch_end} 
		\EndFor
	\end{algorithmic}
	\caption{MergeDTS (Merge Double Thompson Sampling)}
	\label{alg:mergedts}
\end{algorithm}

\ac{MergeDTS} begins by grouping rankers into small batches (Line~\ref{alg:group_batches}). 
At each time-step, \ac{MergeDTS} checks whether there is more than one ranker remaining (Line~\ref{alg:oneranker}). 
If so, \ac{MergeDTS} returns that ranker, the potential Condorcet winner. 
If not, \ac{MergeDTS} considers one batch $B_m$ and, using optimistic estimates of the preference probabilities (Line~\ref{alg:ubound}), it purges any ranker that loses to another ranker even with an optimistic boost in favor of the former (Line~\ref{alg:remove}).

If, as a result of the above purge, $B_m$ becomes a single-element batch, it is merged with the next batch $B_{m+1}$ (Line~\ref{alg:merge}).
Here, $m+1$ is interpreted as modulo $b_s$, where $b_s$ is the number of batches in the current stage.
This is done to avoid comparing a suboptimal ranker against itself, since if there is more than one batch, the best  ranker in any given batch is unlikely to be the Condorcet winner of the whole dueling bandit problem.
As we will see again below, MergeDTS takes great care to avoid comparing suboptimal rankers against themselves because it results in added regret, but yields no extra information, since we know that each ranker is tied with itself.

After the above elimination step, the algorithm proceeds in four phases: 
choosing the first ranker (Phase~I), choosing the second ranker based on the first ranker (Phase II), comparing the two rankers and updating the statistics (Phase III), and repartitioning the rankers at the end of each stage (Phase~IV).
Of the four phases, Phase~I and Phase~II are  the major reasons that lead to a boost in effectiveness of \ac{MergeDTS} when compared to \ac{MergeRUCB}. 
We will elaborate both phases in the remainder of this section. 

In Phase~I, the method \emph{SampleTournament} (Algorithm~\ref{alg:phase1}) chooses the first candidate ranker: \ac{MergeDTS} samples preference probabilities $\theta_{ij}$ from the posterior distributions to estimate the true preference probabilities $p_{ij}$ for all pairs of rankers in the batch $B_m$ (Lines~\ref{alg:first_sample}--\ref{alg:first_sample_end}, the first \ac{TS}).
Based on these sampled probabilities, \ac{MergeDTS} chooses the first candidate $a_c$ so that it beats most of the other rankers according to the sampled preferences (Line~\ref{alg:first_arm}). 

\begin{algorithm}[!h]
	\begin{algorithmic}[1]
		\Input  The comparison matrix $\mathbf{W}$ and the current batch $B_m$.
		\Output The first candidate  $a_c$.
		\For{$a_i, a_j \in B_m$ and $i < j$}	\label{alg:first_sample}
		\State Sample $\theta_{ij} \sim Beta(w_{ij}+1, w_{ji}+1)$ 
		\State $\theta_{ji} = 1 - \theta_{ij}$	
		\EndFor\label{alg:first_sample_end}
		
		\State $\kappa_i = \frac{1}{|B_m|-1} \sum_{a_j \in B_m, j\neq i} \mathds{1}(\theta_{ij}>0.5)$
		\State $a_c = \underset{a_i \in B_m}{\argmax} ~ \kappa_i$; \emph{breaking ties randomly} \hfill \Comment \emph{First candidate} \label{alg:first_arm}
		
	\end{algorithmic}
	\caption{SampleTournament }
	\label{alg:phase1}
\end{algorithm}

In Phase~II, the method \emph{RelativeTournament} (Algorithm~\ref{alg:phase2}) chooses the second candidate ranker: \ac{MergeDTS} samples another set of preference probabilities $\phi_j$ from the posteriors of $p_{jc}$ for all rankers $a_j$ in $B_m \setminus \{a_c\}$ (Lines~\ref{alg:second_sample}--\ref{alg:second_sample_end}, the second \ac{TS}).
Moreover, we set $\phi_c$ to be~$1$ (Line~\ref{alg:ac_nonfinal}). 
This is done to avoid self-comparisons between suboptimal rankers for the reasons that were described above.

\begin{algorithm}[!h]
	\begin{algorithmic}[1]
		\Input  The comparison matrix $\mathbf{W}$, the current batch $B_m$ and the first candidate $a_c$.
		\Output The second candidate  $a_d$.
			\For{$a_j \in B_m$ and  $j \neq c$}\label{alg:second_sample}
				\State Sample $\phi_j \sim Beta(w_{jc} +1, w_{cj} +1)$
			\EndFor\label{alg:second_sample_end}
			\State $\phi_c = 1$ \hfill \Comment \emph{Avoid self-comparison}
			\label{alg:ac_nonfinal}
			\State $a_d = \underset{a_j \in B_m}{\argmin} ~ \phi_j$; \emph{breaking ties randomly}\hfill \Comment \emph{Second candidate}
			\label{alg:second_arm_worst}
	\end{algorithmic}
	\caption{RelativeTournament}
	\label{alg:phase2}
\end{algorithm}

\noindent%
Once the probabilities $\phi_j$ have been sampled, we choose the ranker $a_d$ that is going to be compared against $a_c$, using the following strategy. 
The worst ranker according to the sampled probabilities $\phi_j$ is chosen as the second candidate $a_d$ (Line~\ref{alg:second_arm_worst}).
The rationale for this discrepancy is that we would like to eliminate rankers as quickly as possible, so rather than using the upper confidence bounds to explore when choosing $a_d$, we use the lower confidence bounds to knock the weakest link out of the batch as quickly as possible.

In Phase~III (Line~\ref{alg:update}) of Algorithm~\ref{alg:mergedts}, \ac{MergeDTS} plays $a_c$ and $a_d$ and updates the comparison matrix $\mathbf{W}$ based on the observed feedback. 

Finally, in Phase~IV (Lines~\ref{alg:update_batch}--\ref{alg:update_batch_end}), if the number of remaining rankers in the current stage is half of the rankers of the previous stage (Line~\ref{alg:update_batch}), \ac{MergeDTS} enters the next stage, before which it repartitions the rankers. 
Following the design of \ac{MergeRUCB}, this is done by merging batches of rankers such that the smaller sized batches are combined with the larger sized batches; 
we enforce that the number of rankers in the new batches is kept in the range of $[0.5M, 1.5M]$.

\subsection{Theoretical guarantees}


In this section, we state and prove a high probability upper bound on the regret accumulated by MergeDTS after $T$ steps, under the assumption that the dueling bandit problem contains a Condorcet winner. 
Since the theoretical analysis of MergeDTS is based on that of \ac{MergeRUCB}, we start by listing two assumptions that we borrow from \ac{MergeRUCB} in~\cite[Section~7]{mergerucb}.
\begin{description}[align=left, leftmargin=*]
 	\item[Assumption 1.] There is no repetition in rankers. All ranker pairs $(a_i, a_j)$ with $i \neq j$ are distinguishable, i.e., $p_{ij} \neq 0.5$, unless both of them are ``uninformative'' rankers that provide random ranked lists and cannot beat any other rankers. 
 	\item[Assumption 2.] The uninformative rankers are at most one third of the full set of rankers.  
\end{description}

\noindent%
These assumptions arise from the Yahoo! Learning to Rank Challenge dataset, where there are $181$ out of $700$ rankers that always provide random ranked lists. 
\textbf{Assumption~1} ensures that each informative ranker is distinguishable from other rankers. 
\textbf{Assumption~2} restricts the maximal  percentage of uninformative rankers and thus ensures that the probability of triggering the merge condition (Line~\ref{alg:update_batch} in Algorithm~\ref{alg:mergedts}) is larger than $0$.\footnote{
In practice, \ac{MergeDTS} works without \textbf{Assumption~2} because the Condorcet winner eliminates all other arms eventually with $O(K^2\log{T})$ comparisons. 
We keep \textbf{Assumption~2} to ensure that \ac{MergeDTS} also works in cases where we have the $O(K\log(T))$ guarantee. 
We refer readers to \cite{mergerucb} for a detailed discussion.
}
Moreover, we emphasize that \textbf{Assumption~1} and \textbf{Assumption~2} are milder than the assumptions made in Self-Sparring and DTS, where indistinguishability is simply not allowed.

We now state our main theoretical result:

\begin{theorem}
	\label{th:bound}
	With the known time step $T$, applying \ac{MergeDTS} with $\alpha > 0.5$, $M \geq 4$ and $\epsilon = 1/T$ to a $K$-armed Condorcet dueling bandit problem under \textbf{Assumption~1} and \textbf{Assumption~2}, with probability $1-\epsilon$ the cumulative regret $\mathcal{R}(T)$ after $T$ steps is bounded by: 
	\begin{equation}\label{eq:bound}
	\mathcal{R}(T) < \frac{8\alpha M K \ln(T + C(\epsilon))}{\Delta_{\min}^2}, 
	\end{equation}
where
\begin{equation}\label{eq:Delta_min}
\Delta_{\min} := \min_{\Delta_{ij} > 0} \Delta_{ij},
\end{equation}
is the minimal gap of two distinguishable rankers and $C(\epsilon) =  \left(\frac{(4\alpha -1)K^2}{(2\alpha-1)\epsilon}\right) ^{\frac{1}{2\alpha-1}} $.
\end{theorem}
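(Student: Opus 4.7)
The plan is to piggy‑back on the proof scheme of \ac{MergeRUCB} by exploiting a structural observation about \ac{MergeDTS}: although the Thompson samples in Algorithms~\ref{alg:phase1} and~\ref{alg:phase2} decide \emph{which} pair is played, the elimination certificate on Line~\ref{alg:remove} of Algorithm~\ref{alg:mergedts} is a deterministic function of the UCB matrix $\mathbf{U}$ and of $\mathbf{W}$. Consequently, the high‑probability safety of $a_1$ and the per‑pair comparison budget only depend on the concentration of $\mathbf{U}$, so \ac{MergeRUCB}'s bookkeeping can be re‑used almost verbatim; the TS layer merely dictates the order in which comparisons are spent.

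\emph{Step 1 (good event and the role of $C(\epsilon)$).} I would define
\[
\mathcal{G} := \{\, p_{ij} \le u_{ij}(t)\text{ for every } i\neq j \text{ and every } 1\le t\le T\,\}.
\]
A Chernoff–Hoeffding tail bound for each fixed pair and each value $n$ of $w_{ij}+w_{ji}$, followed by a union bound over the at most $K^2$ pairs, the $T$ rounds and the $T$ possible values of $n$, produces a sum that is controlled by $\sum_{t=1}^{\infty}(t+C(\epsilon))^{-2\alpha}$; matching this sum to $\epsilon/K^2$ is exactly the algebraic identity that forces the choice $C(\epsilon)=\bigl((4\alpha-1)K^2/((2\alpha-1)\epsilon)\bigr)^{1/(2\alpha-1)}$ in Line~\ref{alg:C}, and it requires $\alpha>0.5$ for convergence. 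On $\mathcal{G}$ we have $u_{1j}(t)\ge p_{1j}>0.5$ for every $j\neq 1$, so the Condorcet winner is never purged at Line~\ref{alg:remove}; the merging rule in Phase~IV can only move $a_1$ to a different batch, never remove it.

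\emph{Step 2 (per‑pair comparison cap).} Still on $\mathcal{G}$, I would use the standard two‑sided reading of the UCB: from $u_{ji}\ge p_{ji}=1-p_{ij}$ one gets $w_{ij}/n \le p_{ij}+\sqrt{\alpha\ln(t+C(\epsilon))/n}$, so that $u_{ij}\le p_{ij}+2\sqrt{\alpha\ln(t+C(\epsilon))/n}$ with $n=w_{ij}+w_{ji}$. For any pair with $p_{ij}<0.5$, this upper bound drops below $0.5$ as soon as $n>4\alpha\ln(T+C(\epsilon))/\Delta_{ij}^2$, at which point $a_i$ is purged on its next visit. Hence each suboptimal pair that ever shares a batch is played at most $4\alpha\ln(T+C(\epsilon))/\Delta_{\min}^2$ times. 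This is where the insensitivity to TS matters: the certificate of elimination is deterministic on $\mathcal{G}$, so the sampling rule in Algorithms~\ref{alg:phase1}--\ref{alg:phase2} can neither lower nor inflate this per‑pair cap.

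\emph{Step 3 (aggregation across stages) and the main obstacle.} At stage $s$ each batch has size in $[0.5M,1.5M]$, so the number of within‑batch pairs is at most $b_s\binom{1.5M}{2}\le \tfrac{9}{8}MK_s$, where $K_s$ is the number of surviving rankers. The trigger of Line~\ref{alg:update_batch} together with Assumption~2 forces $K_{s+1}\le K_s/2$, so the geometric sum gives $\sum_s K_s\le 2K$ and hence at most $\tfrac{9}{4}MK$ distinct pairs ever live in a common batch. Multiplying by the per‑pair cap of Step~2 and bounding the instantaneous regret per play by $(\Delta_{1i}+\Delta_{1j})/2\le 1/2$, a careful tracking of constants collapses to the stated coefficient $8\alpha$. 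The main obstacle I expect is precisely this stage‑transition counting: when two batches merge, a surviving suboptimal ranker faces a fresh set of opponents and therefore pays a new per‑pair toll against each; the argument that ``each ranker only ever meets $O(M\log K)$ distinct opponents'' must be paired with the geometric decay of $\sum_s b_s$ and with Assumption~1 (ties only among uninformative rankers) to avoid the naive $O(K^2)$ blow‑up, and this is the one place where I would follow the MergeRUCB bookkeeping most closely rather than re‑deriving it from scratch.
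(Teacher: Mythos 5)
Your proposal is correct and follows essentially the same route as the paper: the paper's proof simply invokes Lemma~3 of \cite{mergerucb} and the argument for Theorem~1 there, justified by exactly the structural observation you lead with --- that the elimination rule on Line~\ref{alg:remove} depends only on $\mathbf{U}$ and $\mathbf{W}$, so the per-pair comparison budget holds ``regardless of the way the rankers are selected.'' The only difference is that you unpack the internals of the cited lemmas (the good event, the choice of $C(\epsilon)$, and the stage-wise aggregation) where the paper treats them as black boxes.
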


\noindent%
The upper bound on the $T$-step cumulative regret of \ac{MergeDTS} is $O(K\ln{(T)} /\Delta_{\min}^2)$. 
In other words, the cumulative regret grows linearly with the number of rankers, $K$.   
This is the most important advantage of \ac{MergeDTS}, which states the potential of applying it to the large-scale online evaluation. 
We emphasize that for most of the $K$-armed dueling bandit algorithms in the literature, the upper bounds contain a $K^2$ term, which renders them unsuitable for large-scale online ranker evaluation.
By the definition of $\Delta_{\min}$ in Equation \eqref{eq:Delta_min} we have $\Delta_{\min} > 0$, and so our bound is well-defined. 
However, the performance of \ac{MergeDTS} may degrade severely when $\Delta_{\min}$ is small. 
$\alpha$ is a common parameter in UCB-type algorithms, called the exploration parameter. 
$\alpha$ controls the trade-off between exploitation and exploration: larger $\alpha$ results in more exploration, whereas smaller $\alpha$ makes the algorithm more exploitative. 
Theoretically, $\alpha$ should be larger than $0.5$.
However, as shown in our experiments, using some values of $\alpha$ that are outside the theoretical regime can lead to a boost in the effectiveness of \ac{MergeDTS}. 


Theorem~\ref{th:bound} provides a finite-horizon high probability bound. 
From a practical point of view, this type of bound is of great utility. 
In practice, bandit algorithms are always deployed and evaluated within limited user iterations~\cite{dbgd,li-contextual-2010}. 
Here, each time step is one user interaction. 
As the number of interactions is provided, we can choose a reasonable step $T$ to make sure the high probability bound holds. 
We can also get an expected regret bound of \ac{MergeDTS} at step $T$ by setting $\epsilon = 1/T$ and adding $1$ to the right-hand side of \eqref{eq:bound}: 
this is because $\mathbb{E}[\mathcal{R}(T)]$ can be bounded by
\begin{equation}
 \frac{1}{T} \cdot T + \frac{T-1}{T} \cdot \frac{8\alpha M K \ln(T + C(\epsilon))}{\Delta_{\min}^2} \leq 1 + \frac{8\alpha M K \ln(T + C(\epsilon))}{\Delta_{\min}^2}. 
\end{equation}
We note that the above expected bound holds only at time-step $T$ and so the horizonless version of \ac{MergeDTS} does not possess an expected regret bound.

The proof of Theorem~\ref{th:bound} relies on the Lemma~3 in~\cite{mergerucb}. We repeat it here for the reader's convenience. 
\begin{lemma}[Lemma~3 in \cite{mergerucb}]
\label{lm:mergerucb}
	Given any pair of distinguishable rankers $a_i, a_j \in B$ and $\epsilon \in [0, 1]$, with the probability of $1-\epsilon$, the maximum number of comparisons that could have been carried out between these two rankers in the first $T$ time-steps before a merger between B and another batch occurs, is bounded by 
	\begin{equation}
			\frac{4\alpha \ln (T + C(\epsilon))}{\Delta_{B, min}^2}, 
	\end{equation}
	where $\Delta_{B, min} = \min_{a_i, a_j \in B ~and ~i\neq j}\Delta_{ij}$ is the minimal gap of two distinguishable rankers in batch $B$.  
\end{lemma}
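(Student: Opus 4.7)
The plan is to show that as long as both $a_i$ and $a_j$ remain in batch $B$, the UCB elimination rule (Line~\ref{alg:remove} of Algorithm~\ref{alg:mergedts}) forces an upper bound on $N := w_{ij}(t) + w_{ji}(t)$. Without loss of generality suppose $p_{ij} > 0.5$, so $\Delta_{ij} = \Delta_{ji}$. Because the algorithm removes $a_j$ the instant $u_{ji}(t) < 0.5$, survival of $a_j$ in $B$ at time $t$ implies
\begin{equation}
\hat p_{ji}(t) + \sqrt{\tfrac{\alpha \ln(t+C(\epsilon))}{N}} \;\ge\; \tfrac{1}{2},
\end{equation}
where $\hat p_{ji} = w_{ji}/N$. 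The goal is to replace $\hat p_{ji}$ by $p_{ji} = 1/2 - \Delta_{ij}$ on a high-probability event, and then solve for $N$.

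The second step is a Chernoff--Hoeffding plus union bound argument. For each fixed ordered pair and each $t$, conditional on $N=n$,
\begin{equation}
\Pr\!\left[\hat p_{ji} - p_{ji} \;\ge\; \sqrt{\tfrac{\alpha \ln(t+C(\epsilon))}{n}}\right] \;\le\; (t+C(\epsilon))^{-2\alpha}.
\end{equation}
I would then union-bound over the (at most $K^2$) pairs and over all $t \ge 1$, so that the total failure probability is at most $K^2\sum_{t \ge 1}(t+C(\epsilon))^{-2\alpha}$. Bounding the tail sum by a comparison integral gives a quantity of the form $K^2 / \bigl((2\alpha-1)\,C(\epsilon)^{2\alpha-1}\bigr)$ up to a leading factor that absorbs the discrete--continuous gap and the two-sidedness of Hoeffding. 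Setting this $\le \epsilon$ and solving for $C(\epsilon)$ is precisely what produces the definition $C(\epsilon) = \bigl(\tfrac{(4\alpha-1)K^2}{(2\alpha-1)\epsilon}\bigr)^{1/(2\alpha-1)}$ baked into the algorithm.

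On the complementary good event, simultaneously for every pair and every $t \le T$, $\hat p_{ji}(t) \le p_{ji} + \sqrt{\alpha\ln(t+C(\epsilon))/N}$. Substituting into the survival inequality yields $(1/2 - \Delta_{ij}) + 2\sqrt{\alpha \ln(t+C(\epsilon))/N} \ge 1/2$, which rearranges to
\begin{equation}
N \;\le\; \frac{4\alpha \ln(t+C(\epsilon))}{\Delta_{ij}^2} \;\le\; \frac{4\alpha \ln(T+C(\epsilon))}{\Delta_{B,\min}^2}.
\end{equation}
Since no further $(a_i,a_j)$ comparison is performed after either ranker is eliminated or after the batch is merged with another, this bounds the total number of such comparisons occurring before the merger event, which is exactly the claim.

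The main obstacle will be pinning down the leading constant inside $C(\epsilon)$ so that the $(4\alpha-1)/(2\alpha-1)$ factor falls out cleanly; this requires bookkeeping around one-sided versus two-sided Hoeffding, ordered versus unordered pairs, and a tight-enough comparison between $\sum_{t \ge 1}(t+C)^{-2\alpha}$ and $\int_0^\infty(x+C)^{-2\alpha}\,dx$. A minor subtlety is that in MergeDTS the candidates $a_c,a_d$ are picked by Thompson sampling rather than by UCB, but the lemma is agnostic to this: it only invokes the elimination rule (which is identical to that of MergeRUCB), so the argument of \cite{mergerucb} transfers verbatim once the concentration event is established.
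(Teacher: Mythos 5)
The paper does not actually prove this lemma: it is imported verbatim, by citation, as Lemma~3 of the \ac{MergeRUCB} paper, and the only original observation the paper adds is the one you make in your final paragraph --- that the bound depends only on the elimination rule, not on how $a_c$ and $a_d$ are selected, so it transfers to \ac{MergeDTS} unchanged. Your reconstruction (survival under the UCB elimination rule forces $\hat p_{ji} + \sqrt{\alpha\ln(t+C(\epsilon))/N} \ge 1/2$; a Chernoff--Hoeffding plus union bound over pairs and time steps controls $\hat p_{ji} - p_{ji}$ and fixes the form of $C(\epsilon)$; solving for $N$ gives $4\alpha\ln(T+C(\epsilon))/\Delta_{ij}^2 \le 4\alpha\ln(T+C(\epsilon))/\Delta_{B,\min}^2$) is the standard argument underlying the cited result and is essentially correct, so there is nothing to contrast with here.
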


\begin{proof}[Proof of Theorem~\ref{th:bound}]
Lemma~\ref{lm:mergerucb} states that with probability $1-\epsilon$ the number of comparisons between a pair of distinguishable rankers $(i, j) \in B$ is bounded by 
\begin{equation}
\frac{4\alpha \ln (T + C(\epsilon))}{\Delta_{B, min}^2},
\end{equation}
regardless of the way the rankers are selected, as long as the same criterion as \ac{MergeRUCB} is used for eliminating rankers.
Since the elimination criterion for MergeDTS is the same as that of MergeRUCB, we can apply the same argument used to prove Theorem~1 in \cite{mergerucb} to get a bound of 
\begin{equation}
\frac{8\alpha M K \ln(T+ C(\epsilon)) }{\Delta_{\min}^2}
\end{equation}
on the regret accumulated by MergeDTS. 
Here we use the fact that $\Delta_{B, min} \geq \Delta_{min}$ and thus $\frac{4\alpha \ln (T + C(\epsilon))}{\Delta_{B, min}^2} \leq \frac{4\alpha \ln (T + C(\epsilon))}{\Delta_{min}^2}$. 
\qedhere
\end{proof}

\subsection{Discussion}

The prefix ``merge'' in \ac{MergeDTS} signifies the fact that it uses a similar divide-and-conquer strategy as merge sort. 
It partitions the $K$-arm set into small batches of size $M$. 
The comparisons only happen between rankers in the same batch, which, in turn, avoids global  pairwise comparisons and gets rid of the $O(K^2)$ dependence in the cumulative regret, which is the main limitation for using dueling bandits for large datasets. 

In contrast to sorting, \ac{MergeDTS} needs a large number of comparisons before declaring a difference between rankers since the feedback is stochastic. 
The harder two rankers are to distinguish or in other words the closer $p_{ij}$ is to $0.5$, the more comparisons are required. 
Moreover, if a batch only contains the uninformative rankers, the comparisons between those rankers will not stop, which incurs infinite regret. 
\ac{MergeDTS} reduces the number of comparisons between hardly distinguishable rankers as follows:
\begin{enumerate}
	\item \ac{MergeDTS} compares the best ranker in the batch to the worst to avoid comparisons between hardly distinguishable rankers; 
	\item when half of the rankers of the previous stage are eliminated, \ac{MergeDTS} pairs larger batches to smaller ones that contain at least one informative ranker and enters the next stage. 
\end{enumerate}
The second item is borrowed from the design of \ac{MergeRUCB}. 

\ac{MergeDTS} and  \ac{MergeRUCB} follow the same ``merge'' strategy.
The difference between these two algorithms is in their strategy of choosing rankers, i.e., Algorithms~\ref{alg:phase1} and~\ref{alg:phase2}. 
\ac{MergeDTS} employs a sampling strategy to choose the first ranker inside the batch and then uses another 
sampling strategy to choose the second ranker that is potentially beaten by the first one. 
As stated above, this design comes from the fact that \ac{MergeDTS} is  carefully designed to reduce the comparisons between barely distinguishable rankers. 
In contrast to \ac{MergeDTS}, \ac{MergeRUCB} randomly chooses the first ranker and chooses the second ranker to be the one that is the most likely to beat the first ranker, as discussed in Section~\ref{sec:relatedwork}.
The uniformly random strategy inevitably increases the number of comparisons between those barely distinguishable rankers. 

In summary, the double sampling strategy used by \ac{MergeDTS} is the major factor that leads to the superior performance of \ac{MergeDTS} as demonstrated by our experiments. 


\section{Experimental Setup}
\label{sec:exp_setup}

\subsection{Research questions}
In this paper, we investigate the application of dueling bandits to the large-scale online ranker evaluation setup. 
Our experiments are designed to answer the following research questions:
\begin{enumerate}[align=left]
	\item[\textbf{RQ1}] Does \ac{MergeDTS} outperform the state-of-the-art large-scale algorithm \ac{MergeRUCB} as well as the more recently proposed Self-Sparring in terms of cumulative regret, i.e., effectiveness?
\end{enumerate}
In the bandit literature~\cite{DBsurvey2,DBsurveyold,DBsurvey}, regret is a measure of the rate of convergence to the Condorcet winner in hindsight. 
Mapping this to the online ranker evaluation setting, \textbf{RQ1} asks whether  \ac{MergeDTS} hurts the user experience less than baselines while it is being used for large-scale online ranker evaluation.
\begin{enumerate}[align=left]
	\item[\textbf{RQ2}]  How do \ac{MergeDTS} and the baselines scale computationally?
\end{enumerate}
What is the time complexity of \ac{MergeDTS}? 
Does \ac{MergeDTS} require less running time than the baselines? 

\begin{enumerate}[align=left]
	\item[\textbf{RQ3}] How do different levels of noise in the feedback signal affect cumulative regret of \ac{MergeDTS} and the baselines?
\end{enumerate}
In particular, can we still observe the same results in \textbf{RQ1} after a (simulated) user changes its behavior? 
How sensitive are \ac{MergeDTS} and the baselines to noise? 

\begin{enumerate}[align=left]
	\item[\textbf{RQ4}] How do \ac{MergeDTS} and the baselines perform when the Condorcet dueling bandit problem contains cycles?
\end{enumerate}
Previous work has found that cyclical preference relations between rankers are abundant in online ranker comparisons~\cite{rcs,mergerucb}. 
Can \ac{MergeDTS} and the baselines find the Condorcet winner when the experimental setup features a large number of cyclical relations between rankers?

\begin{enumerate}[align=left]
	\item[\textbf{RQ5}] How does \ac{MergeDTS} perform when the dueling bandit problem violates the Condorcet assumption?
\end{enumerate}
We focus on the Condorcet dueling bandit task in this paper. 
Can \ac{MergeDTS} be applied to the dueling bandit tasks without the existence of a Condorcet winner? 
\begin{enumerate}
	\item[\textbf{RQ6}] Which approach finds the best ranker faster: \ac{MergeDTS} together with \ac{PI} or \ac{SOSM}? 
\end{enumerate}
There are two general approaches to evaluate rankers online: \begin{inparaenum}
	\item  using dueling bandits together with interleaving~\cite{mergerucb,rcs};
	\item directly using multileaving methods~\cite{Brost:2016}. 
\end{inparaenum}
The former approach can output the best ranker with high confidence,
while the later approach can compare multiple rankers simultaneously which may shorten the comparison process.

\begin{enumerate}
	\item[\textbf{RQ7}] What is the parameter sensitivity of \ac{MergeDTS}?
\end{enumerate}
Can we improve the performance of \ac{MergeDTS} by tuning its parameters, such as the exploration parameter $\alpha$, the size of a batch $M$, and the probability of failure~$\epsilon$?

\subsection{Datasets}
\label{sec:dataset}
To answer our research questions, we use two types of dataset: three real-world datasets and a synthetic dataset.

	First, to answer \textbf{RQ1}--\textbf{RQ3} and \textbf{RQ6}, we run experiments on three large-scale datasets: the Microsoft Learning to Rank (MSLR) WEB30K dataset~\cite{mslr}, the Yahoo!\ Learning to Rank Challenge Set 1 (Yahoo)~\cite{chapelle2011yahoo} and the ISTELLA dataset~\cite{istella}.\footnote{
	We omit the Yahoo Set 2 dataset because it contains far fewer queries than the Yahoo Set 1 dataset.}
These datasets contain a large number of features based on unsupervised ranking functions, such as BM25, TF-IDF, etc.
In our experiments, we take the widely used setup in which each individual feature is regarded as a ranker~\cite{mergerucb,hofmann2011probabilistic}.  
This is different from a real-world setup, where a search system normally ranks documents using a well trained learning to rank algorithm that combines multiple features.
However, the difficulty of a dueling bandit problem comes from the relative quality of pairs of rankers and not from their absolute quality.  
In other words, evaluating rankers with similar and possibly low performance is as hard as evaluating state-of-the-art rankers, e.g., LambdaMART~\cite{lambdamart}. 
Therefore, we stick to the standard setup of~\cite{mergerucb,hofmann2011probabilistic}, treating each feature as a ranker
and each ranker as an arm in the $K$-armed dueling bandit problem. 
We leave experiments aimed at comparing different well trained learning to rank algorithms as future work. 
As a summary, the MSLR dataset contains $136$ rankers, the Yahoo dataset contains $700$ rankers and the ISTELLA dataset contains $220$ rankers.
Compared to the typical $K$-armed dueling bandit setups, where $K$ is generally substantially smaller than $100$~\cite{selfsparring,btm,sparring,dts}, these are large numbers of rankers. 

Second, to answer \textbf{RQ4}, we use a synthetic dataset, generated by~\citet{mergerucb}, which contains cycles (called the \emph{Cycle dataset} in the rest of the paper). 
The Cycle dataset has $20$ rankers with one Condorcet winner, $a_1$, and $19$ suboptimal rankers, $a_2, \ldots, a_{20}$.  
The Condorcet winner beats the other $19$ suboptimal rankers. 
And those $19$ rankers have a cyclical preference relationship between them. 
More precisely, following~\citet{mergerucb}, the estimated probability $p_{1j}$ of $a_1$ beating $a_j$  ($j=2, \ldots, 20$ is set to $p_{1j} = 0.51$, and the preference relationships between the suboptimal rankers are described as follows: visualize the $19$ rankers $a_2, \ldots, a_{20}$ sitting at a round table, then each ranker beats every ranker to its left with probability $1$ and loses to every ranker to its right with probability $1$.
In this way we obtain the Cycle dataset.

Note that this is a difficult setup for Self-Sparring, because Self-Sparring chooses rankers based on their Borda scores,
and the Borda scores ($\sum_{j=1}^{K}p_{ij}$ for each ranker $a_i$~\cite{savage}) are close to each other in the Cycle dataset.
For example, the Borda score of the Condorcet winner is $10.19$, while the Borda score of a suboptimal ranker is $9.99$.
This makes it hard for Self-Sparring to distinguish between rankers.
In order to be able to conduct a fair comparison, we generate the \emph{Cycle2 dataset}, where each suboptimal ranker beats every ranker to its left with a probability of $0.51$ and the Condorcet winner beats all others with probability $0.6$. 
Now, in the Cycle2 dataset, the Borda score of the Condorcet winner is $11.90$, while the Borda score of a suboptimal ranker is $9.9$. 
Thus, it is an easier setup for Self-Sparring. 
 
Furthermore, to answer \textbf{RQ5}, we use the MSLR-non-Condorcet dataset from~\cite{dts}, which is a subset of the MSLR dataset that does not contain a Condorcet winner. 
This datasets has $32$ rankers with two Copeland winners (instead of one), each of which beats the other $30$ rankers. 
A Copeland winner is a ranker that beats the largest number of other rankers; every dueling bandit dataset contains at least one Copeland winner~\cite{ccb}.

Finally, we use the MSLR dataset with the navigational configuration (described in Section~\ref{sec:clicksimulation}) to assess the parameter sensitivity of \ac{MergeDTS}~(\textbf{RQ7}).

\subsection{Evaluation methodology}
To evaluate dueling bandit algorithms, we follow the proxy approach from~\cite{mergerucb}.
It first uses an interleaving algorithm to obtain a preference matrix, i.e., a matrix that for each pair of rankers contains the probability that one ranker beats the other.
More precisely, for each pair of rankers $a_i$ and $a_j$, $p_{ij}$ is the estimation that $a_i$ beats $a_j$ in the simulated interleaved comparisons. 
Then, this obtained preference matrix is used to evaluate dueling bandit algorithms: 
for two rankers $a_i$ and $a_j$ chosen by a dueling bandit algorithm, we compare them by drawing a sample from a Bernoulli distribution with mean $p_{ij}$, i.e., $1$ means that $a_i$ beats $a_j$ and vice versa. 
This is a standard approach to evaluating dueling bandit algorithms~\cite{btm,dts,rcs}. 
Moreover, the proxy approach has been shown to have the same quality as interleaving in terms of evaluating dueling bandit algorithms~\cite{mergerucb}.

In this paper, we adopt the procedure described by \citet{mergerucb}, who use Probabilistic Interleave~\citep{hofmann2011probabilistic} to obtain a preference matrix for the MSLR dataset, and obtain a preference matrix for the Yahoo datasets.\footnote{We use the implementation of Probabilistic Interleave in the LEROT software package~\cite{lerot}.}
In the case of the MSLR dataset, the number of comparisons for every pair of rankers is \numprint{400000} and, in the case of the Yahoo dataset, we use \numprint{60000} comparisons per pair of rankers.
The reason for this discrepancy is pragmatic: the latter dataset has roughly $27$ times as many pairs of rankers to be compared.

\subsection{Click simulation}
\label{sec:clicksimulation}
Since the interleaved comparisons mentioned above are carried out using click feedback, we follow~\citet{interleave} and simulate clicks using three configurations of a click model~\cite{cm}: namely \emph{perfect}, \emph{navigational} and \emph{informational}.
The perfect configuration simulates a user who checks every document and clicks on a document with a probability proportional to the query-document relevance. 
This configuration is the easiest one for dueling bandit algorithms to find the best ranker, because it contains very little noise. 
The navigational configuration mimics a user who seeks specific information, i.e., who may be searching for the link of a website, and is likely to stop browsing results after finding a relevant document. 
The navigational configuration contains more noise than the perfect configuration and is harder for dueling bandit algorithms to find the best ranker. 
Finally, the informational configuration represents a user who wants to gather all available information for a query and may click on documents that are not relevant with high probability. 
In the informational configuration the feedback contains more noise than in the perfect and navigational configurations, which makes it the most difficult configuration for dueling bandit algorithms to determine the best ranker, which, in turn, may result in the highest cumulative regret among the three configurations.

To answer the research questions that concern large-scale dueling bandit problems, namely \textbf{RQ1}, \textbf{RQ2}, \textbf{RQ6} and \textbf{RQ7}, we use the navigational configuration, which represents a reasonable middle ground between the perfect and informational configurations~\cite{hofmann2011probabilistic}.
The corresponding experimental setups are called MSLR-Navigational, Yahoo-Navigational and ISTELLA-Navigational.
To answer \textbf{RQ3} regarding the effect of feedback with different levels of noise,
we use all three configurations on MSLR, Yahoo and ISTELLA datasets: namely MSLR-Perfect, MSLR-Navigational and MSLR-Informational; Yahoo-Perfect, Yahoo-Navigational and Yahoo-Informational; ISTELLA-Perfect, ISTELLA-Navigational and ISTELLA-Informational. 
Thus, we have nine large-scale setups in total.

\subsection{Baselines}
We compare \ac{MergeDTS} to five state-of-the-art dueling bandit algorithms:
\ac{MergeRUCB}~\cite{mergerucb}, \ac{DTS}~\cite{dts}, \ac{RMED1}~\cite{rmed}, Self-Sparring~\cite{selfsparring}, and REX3~\cite{gajane2015relative}. 
Among these algorithms, \ac{MergeRUCB} is designed for large-scale online ranker evaluation and is the state-of-the-art large-scale dueling bandit algorithm. 
\ac{DTS} is the state-of-the-art small-scale dueling bandit algorithm.
\ac{RMED1} is motivated by the lower bound of the Condorcet dueling bandit problem and matches the lower bound up to a factor of $O(K^2)$, which indicates that \ac{RMED1} has low regret in small-scale problems but may have large regret when the number of rankers is large. 
Self-Sparring is a more recently proposed dueling bandit algorithm that is the state-of-the-art algorithm in the multi-dueling setup, with which multiple rankers can be compared in each step.
REX3 is proposed for the adversarial dueling bandit problem but also performs well for the large-scale stochastic dueling bandit problem~\cite{gajane2015relative}. 
We do not include \ac{RUCB}~\cite{rucb} and Sparring~\cite{sparring} in our experiments since they have been outperformed by more than one of our baselines~\cite{mergerucb,selfsparring,gajane2015relative}. 

\subsection{Parameters}
\label{sec:parameters setup}

Recall that Theorem~\ref{th:bound} is based on Lemma~3 in~\cite{mergerucb}.
The latter provides a high probability guarantee that the confidence intervals will not mislead the algorithm into eliminating the Condorcet winner by mistake.
However, this result is proven using the Chernoff-Hoeffding \cite{Hoeffding:1963} bound together with an application of the union bound \cite{casella2002statistical}, both of which introduce certain gaps between theory and practice.
That is, the analysis of regret mainly considers the worst-case scenario rather than the average-case scenario, which makes regret bounds much looser than they could have been.
We conjecture that the expression for $C(\epsilon)$, which derives its form from Lemma~3 in~\cite{mergerucb}, is excessively conservative.
Put differently, Theorem~\ref{th:bound} specifies a sufficient condition for the proper functioning of \ac{MergeDTS}, not a necessary one. 
So, a natural question that arises is the following: to what extent can restrictions imposed by our theoretical results be violated without the algorithm failing in practice?
In short, what is the gap between theory and practice and what is the parameter sensitivity of \ac{MergeDTS}? 

To address these questions and answer \textbf{RQ7}, we conduct extensive parameter sensitivity analyses in the MSLR-Navigational setup with the following parameters: 
$\alpha \in \{0.8 ^{0}$, $0.8^{1}$, \dots, $0.8^{9}\}$,
$C \in \{4 \times 10^2$, $4\times 10^3, \ldots, 4 \times 10^6, \numprint{4726908}\}$, and $M \in \{2$, $4$,  $8$, $16\}$. 
$C$ is short for $C(\epsilon)$,	where $C(\epsilon)$ is the exploration bonus added to the confidence intervals. 
According to Table~\ref{tb:notation}, $C(\epsilon)$ is a function of $\alpha$ and $\epsilon$. 
However, to simplify our experimental setup, we consider $C$ as an individual parameter rather than a function parameterized by $\alpha$ and $\epsilon$, and study the impact of $C$ directly. 
The details about  the choice of the values are explained in the following paragraph.

When choosing candidate values for $\alpha$, we want them to cover the optimal theoretical value $\alpha > 1$, the lowest theoretically legal value $\alpha > 0.5$, and for smaller values of $\alpha$ we want to decrease the differences between two consecutive $\alpha$'s. 
This last condition is imposed because smaller values of $\alpha$ may mislead \ac{MergeDTS} to eliminate the Condorcet winner. 
So we shrink the search space for smaller values of $\alpha$.
The powers of $0.8$ from $0$ to $9$ seem to satisfy the above conditions, particularly $0.8^3 \approx 0.5$ and obviously $0.8^0 = 1$ with the difference between $0.8^n$ and $0.8^{n+1}$ becoming smaller with larger $n$.
The value $C = \numprint{4726908}$ is calculated from the definition of $C(\epsilon)$ with the default $\alpha=1.01$ and $M=4$ (see Table~\ref{tb:notation}), noting that the MSLR-Navigational setup contains $136$  rankers, i.e., $K=136$.
As discussed before, the design of $C(\epsilon)$ may be too conservative. 
So, we only choose candidate values smaller than $\numprint{4726908}$.
We use the log-scale of $C(\epsilon)$ because the upper bound is logarithmic with $C(\epsilon)$.

The sensitivity of parameters is analyzed by following the order of their importance to Theorem~\ref{th:bound}, i.e., $\alpha$ and $M$ have a linear relation to the cumulative regret and $C$ has a logarithmic relation to the cumulative regret.
We first evaluate the sensitivity of $\alpha$ with the default values of $M$ and $C$.
Then we use the best value of $\alpha$ to test a range of values of $M$ (with default $C$).
Finally, we analyze the impact of $C$ using the best values of $\alpha$ and $M$. 

We discover the practically optimal parameters for \ac{MergeDTS} to be $\alpha=0.8^6$, $M=16$ and $C=\numprint{4000000}$, in Section~\ref{sec:parameter}. 
We repeat the procedure for \ac{MergeRUCB} and \ac{DTS}, and use their optimal parameter values in our experiments,
which are $\alpha = 0.8^6$, $M=8$, $C=\numprint{400000}$ for \ac{MergeRUCB} and $\alpha = 0.8^7$ for \ac{DTS}.
Then, we use these values to answer \textbf{RQ1}--\textbf{RQ5}. 
Self-Sparring does not have any parameters, so further analysis and tuning are not needed here.

	The shrewd readers may notice that the parameters are somewhat overtuned in the MSLR-Navigational setup, and \ac{MergeDTS} with the tuned parameters does not enjoy the theoretical guarantees in Theorem~\ref{th:bound}. 
	However, because of the existence of the gap between theory and practice, we want to answer the question whether we can improve the performance of MergeDTS as well as that of the baselines by tuning the parameters outside of their theoretical limits. 
	We also want to emphasize that the parameters of MergeDTS and baselines are only tuned in the MSLR-Navigational setup, but MergeDTS is compared to baselines in nine setups. 
	If, with the tuned parameters, MergeDTS outperforms baselines in other eight setups, we can also show the potential of improving MergeDTS in an empirical way. 

\subsection{Metrics}
In our experiments, we assess the efficiency (time complexity) and effectiveness (cumulative regret) of \ac{MergeDTS} and baselines. 
The metric for efficiency is the running time in days. 
We compute the running time from the start of the first step to the end of the $T$-th step, where $T = 10^8$ in our experiments.  
A commercial search engine may serve more than $1$ billion search requests per day~\cite{url:google}, and each search request can be used to conduct one dueling bandit comparison. 
The total number of time steps, i.e. $T$, considered in our experiments is about $1\%$ of the one-week traffic of a commercial search engine.

We use cumulative regret in $T$ steps to measure the effectiveness of algorithms, which is computed as follows:
\begin{equation}
    \mathcal{R}(T) = \sum_{t = 1}^{T} r(t) =   \sum_{t = 1}^{T}  \frac{1}{2}(\Delta_{1,c_t} + \Delta_{1,d_t}),
\end{equation}
where $r(t)$ is the regret at step $t$, $c_t$ and $d_t$ are the indices of rankers chosen at step $t$, and without loss of generality, we assume $a_1$ to be the Condorcet winner. 
The regret $r(t)$ arises from the comparisons between the two suboptimal rankers at $t$ step. 
It is the average sub-optimality of comparing two rankers $a_i$ and $a_j$ with respect to the Condorcet winner $a_1$, i.e., $\frac{p_{1i} + p{1j}}{2} - 0.5$. 
In a real-world scenario, we have a fixed time period to conduct our online ranker evaluation, and thus, the number of steps $T$ can be estimated beforehand. 
In our online ranker evaluation task, the $T$-step cumulative regret is related to the drop in user satisfaction during the evaluation process, i.e. higher regret means larger degradation of user satisfaction, because the preference $p_{1i}$ can be interpreted as the probability of the Condorcet winner being preferred to ranker $i$.

Unless stated differently, the results in all experiments are averaged over $50$ and $100$ independent runs on large- and small-scale datasets respectively, where both numbers are either equal to or larger than the choices in previous studies~\cite{dts,selfsparring,mergerucb}. 
In the effectiveness experiments, we also report the standard error of the average cumulative regret, which measures the differences between the average of samples and the expectation of the sample population. 

All experiments on the MSLR and Yahoo datasets are conducted on a server with Intel(R) Xeon(R) CPU E5-2650  $2.00$GHz ($32$ Cores) and $64$ Gigabyte.
All experiments on the ISTELLA dataset are conducted on servers with Intel(R) Xeon(R) Gold 5118 CPU @ $2.30$GHz ($48$ Cores) and $256$ Gigabyte. 
To be precise, an individual run of  each algorithm is conducted on a single core with 1 Gigabyte. 

\section{Results}
\label{section:results}

\newcommand{\plotheight}{0.98}
\newcommand{\wideplotheight}{0.68}

In this section, we analyze the results of our experiments.
In Section~\ref{sec:large-scale}, we compare the effectiveness (cumulative regret) of \ac{MergeDTS} and the baselines in three large-scale online evaluation setups. 
In Section~\ref{sec:computational}, we compare and analyze the efficiency (time complexity) of \ac{MergeDTS} and the baselines. 
In Section~\ref{sec:noise}, we study the impact of different levels of noise in the click feedback on the algorithms. 
In Section~\ref{sec:cycle} and Section~\ref{sec:noncondorcet}, we evaluate \ac{MergeDTS} and the baselines in two alternative setups: the cyclic case and the non-Condorcet case, respectively. 
In Section~\ref{sec:multileaving}, we compare \ac{MergeDTS} to multileaving methods. 
In Section~\ref{sec:parameter}, we analyze the parameter sensitivity of \ac{MergeDTS}.

\subsection{Large-scale experiments}
\label{sec:large-scale}

\begin{figure}[t]
	\centering
	\includegraphics{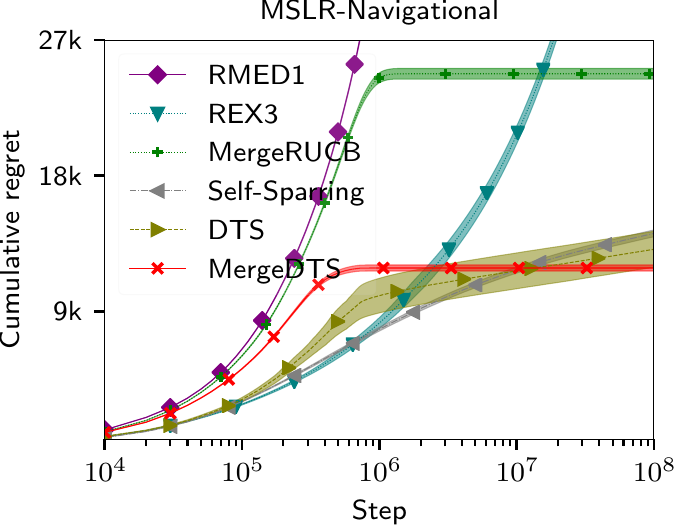}
	\includegraphics{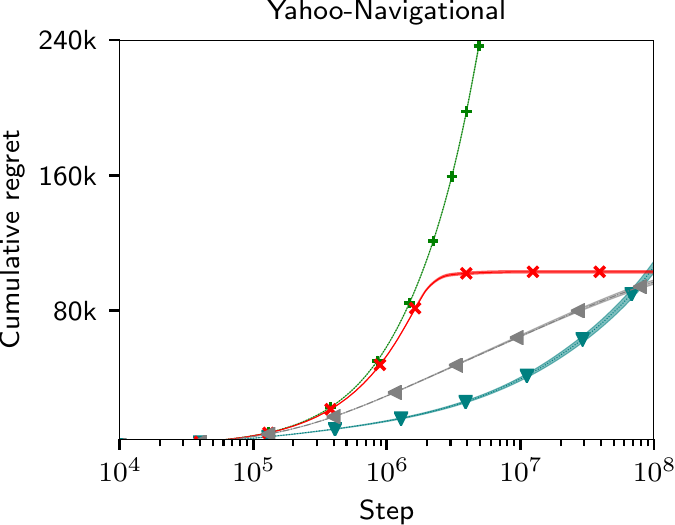}
	\includegraphics{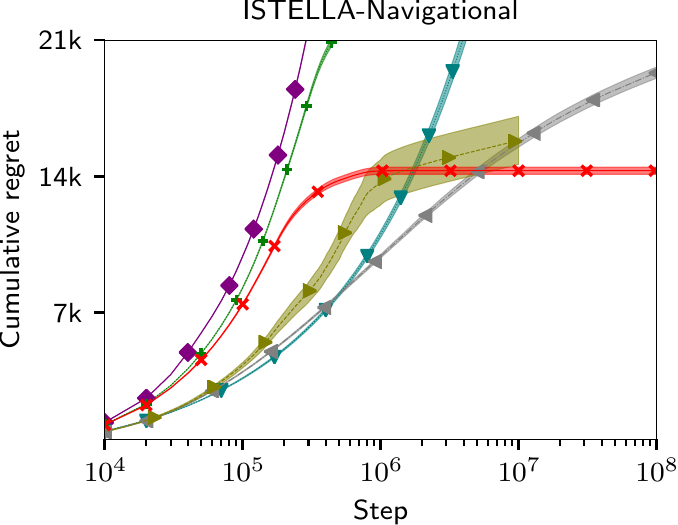}	
	\vspace*{1mm}
	\caption{Cumulative regret on large-scale online ranker evaluation: lower is better. 
	Note that the scales of the y-axes are different. 
	The shaded areas are $\pm$ standard error. 
	The results are averaged over $50$ independent runs.
	}  
	\label{fig:large}	
\end{figure}

To answer \textbf{RQ1}, we compare \ac{MergeDTS} to the large-scale state-of-the-art baseline, \ac{MergeRUCB}, as well as the more recently proposed Self-Sparring, in three large-scale online evaluation setups, namely MSLR-Navigational, Yahoo-Navigational and ISTELLA-Navigational. 
The results are reported in Figure~\ref{fig:large}, which depicts the cumulative regret of each algorithm, averaged over $50$ independent runs. 
As mentioned in Section~\ref{sec:exp_setup}, cumulative regret measures the rate of convergence to the Condorcet winner in hindsight and thus lower regret curves are better.
\ac{DTS} and \ac{RMED1} are not considered in the Yahoo-Navigational setup, and the cumulative regret of \ac{DTS} is reported with in $10^7$ steps on the ISTELLA dataset, because of the computational issues,
which are further discussed in Section~\ref{sec:computational}.
Figure~\ref{fig:large} shows that MergeDTS  outperforms the large-scale state-of-the-art MergeRUCB with large gaps. 
Regarding the comparison with Self-Spar\-ring and DTS, we would like to point out the following facts:
\begin{inparaenum}[(1)]
\item Merge\-DTS outperforms \ac{DTS} and Self-Sparring in MSLR-Navigational and ISTELLA-Navigational setups; 
\item 
	In the Yahoo-Navigational setup, \ac{MergeDTS} has slightly higher cumulative regret than Self-Sparring, but \ac{MergeDTS} converges to the Condorcet winner after three million steps while the cumulative regret of Self-Sparring is still growing after $100$ million steps.
\end{inparaenum}
Translating these facts into real-world scenarios, 
\ac{MergeDTS} has higher regret compared to DTS and Self-Sparring in the early steps, but \ac{MergeDTS} eventually outperforms DTS and Self-Sparring with longer experiments.
As for \ac{REX3}, we see that it has a higher order of regret than other algorithms since \ac{REX3} is designed for the adversarial dueling bandits and the regret of \ac{REX3} is $O(\sqrt{T})$. 
In this paper, we consider the stochastic dueling bandits, and the regret of the other algorithms is $\log{(T)}$. 
\ac{MergeDTS} outperforms the baselines in most setups, but we need to emphasize that the performance of \ac{MergeDTS} here cannot be guaranteed by Theorem~\ref{th:bound}.
This is because we use the parameter setup outside of the theoretical regime, as discussed in Section~\ref{sec:parameters setup}. 

\subsection{Computational scalability}
\label{sec:computational}

\begin{table}[h]
	\caption{
	Average running time in days of each algorithm on large-scale problems for $10^8$ steps averaged over $50$ independent runs. 
	The running time of $\ac{DTS}$ in the ISTELLA-Navigational setup is estimated based on the running time with $10^7$ steps multiplied by $10$. 
 	The running time of \ac{DTS} and \ac{RMED1} in the Yahoo-Navigational setup is estimated by multiplying the average running time at $10^5$ steps by $10^3$.
 		The experiments on the ISTELLA dataset are conducted on different computer clusters from those on the MSLR and Yahoo datasets.
 		The speed of the former ones is about one time faster than the latter ones. 
 		Therefore the numbers may not be directly compared. 
	}
	\label{tb:running_time}
	\centering
	\begin{tabular}{lrrr}
		\toprule
		& MSLR & ISTELLA & Yahoo  \\
		\midrule
		\# Rankers & $136$  &\phantom{10}$220$ & \phantom{10}$700$ \\
		\midrule
		\ac{MergeDTS} & $ 0.08$ &\phantom{10} $0.03$ & \phantom{10}$0.11$ \\
		\ac{MergeRUCB} & $0.08$ &\phantom{10} $0.03$& \phantom{10}$0.11$ \\
		Self-Sparring & $0.18$ &\phantom{10} $0.22$ & \phantom{10}$0.90$ \\
		\ac{DTS}& $5.23$ &\emph{9.88} & $\emph{100.27}$ \\
		\ac{RMED1}& $0.36$ &\phantom{10} $ 0.19$ & \phantom{0}$\emph{18.39}$ \\
		\ac{REX3}& $0.25$ &\phantom{10} $ 0.11$  & \phantom{10}$0.27$ \\				
		\bottomrule
	\end{tabular}
\end{table}

\noindent%
To address \textbf{RQ2}, we report in Table~\ref{tb:running_time} the average running time (in days) of each algorithm in three large-scale dueling bandit setups, namely MSLR-Navigational, ISTELLA and Yahoo-Navigational.
As before, each algorithm is run for $10^8$ steps.
An individual run of  DTS  and \ac{RMED1} in the Yahoo-Navigational setup takes around $100.27$ and $18.39$ days, respectively, which is simply impractical for our experiments; therefore, the running time of DTS and \ac{RMED1} in this setup is estimated by multiplying the average running time at $10^5$ steps by $10^3$.
For a similar reason, we estimate the running time of DTS on the ISTELLA-Navigational setup by multiplying the average running time at $10^7$ by $10$. 

Table~\ref{tb:running_time} shows that \ac{MergeDTS} and \ac{MergeRUCB} have very low running times. 
This is due to the fact that they perform computations inside batches
and their computational complexity is $O(TM^2)$, where $T$ is the number of steps and $M$ is the size of batches. 
Moreover, \ac{MergeDTS} is considerably faster in the MSLR-Navigational setup, because there it finds the best ranker with fewer steps, as can be seen in Figure~\ref{fig:large}.
After finding this best ranker, the size of batches $M$ becomes $1$ and, from that moment on, \ac{MergeDTS} does not perform any extra computations.
The running time of Self-Sparring is also low, but grows with the number of rankers, roughly linearly.
This is because at each step Self-Sparring draws a sample from the posterior distribution of each ranker and its running time is $\Omega(TK)$,
where $K$ is the number of rankers.
\ac{DTS} is orders of magnitude slower than other algorithms and its running time grows roughly quadratically, because \ac{DTS} requires a sample for each pair of rankers at each step and its running time is $\Omega(TK^2)$. 
 
Large-scale commercial search systems process over a billion queries a day \cite{url:google}, and run hundreds of different experiments \cite{kohavi2013online} concurrently, in each of which two rankers are compared.  
The running time for \ac{DTS} and \ac{RMED1} that appears in Table~\ref{tb:running_time} is far beyond the realm of what might be considered reasonable to process on even $20\%$ of one day's traffic: note that one query in the search setting corresponds to one step for a dueling bandit algorithm, since each query could be used to compare two rankers by performing an interleaved comparison. 
Given the estimated running times listed in Table~\ref{tb:running_time}, we exclude \ac{DTS} and \ac{RMED1} from our experiments on the large-scale datasets for practical reasons.

\subsection{Impact of noise}
\label{sec:noise}

\begin{figure}[t]
	\centering
	\includegraphics{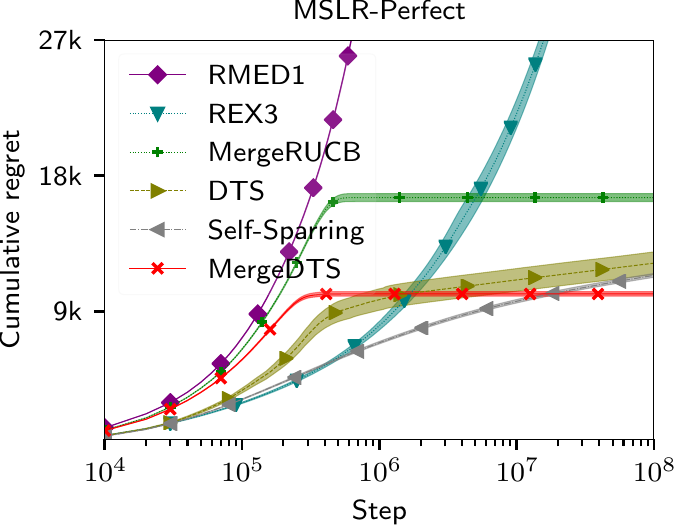}
	\includegraphics{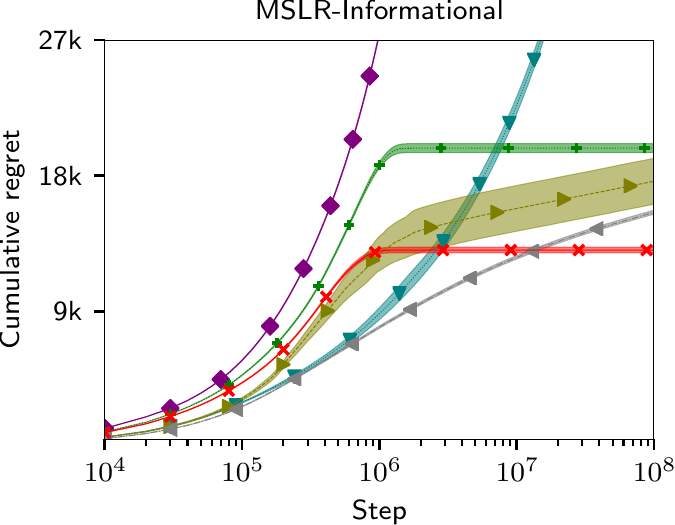}
	\vspace*{1mm}
	\includegraphics{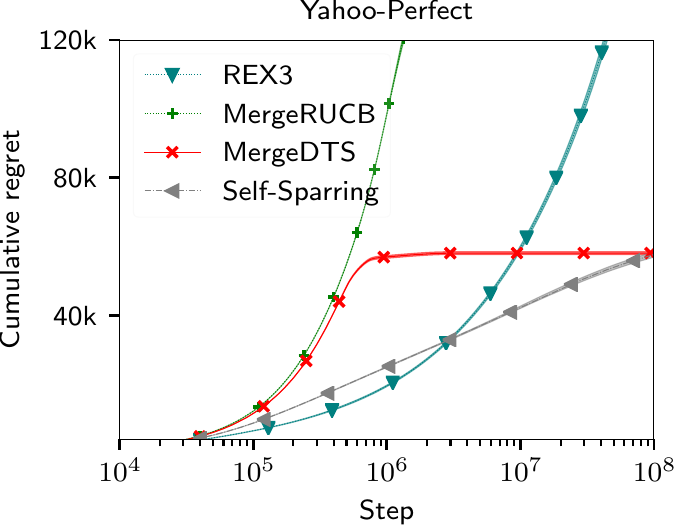}
	\includegraphics{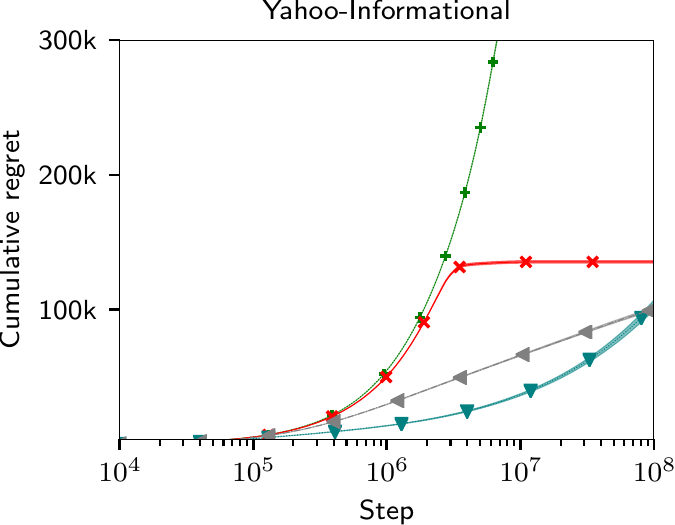}
	\vspace*{1mm}
	\includegraphics{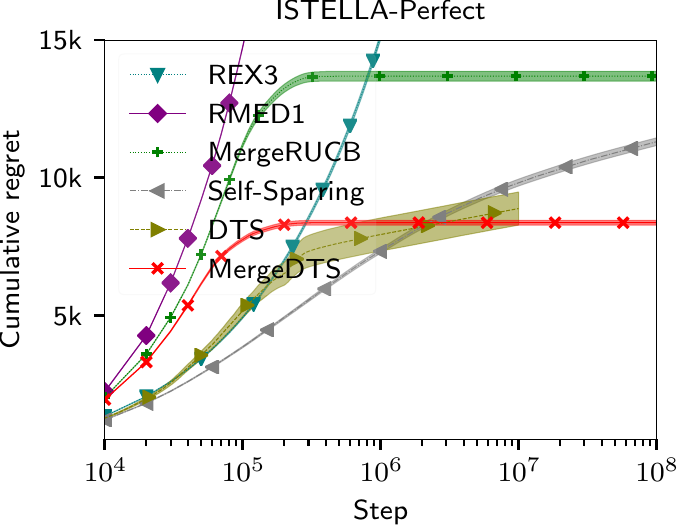}
	\includegraphics{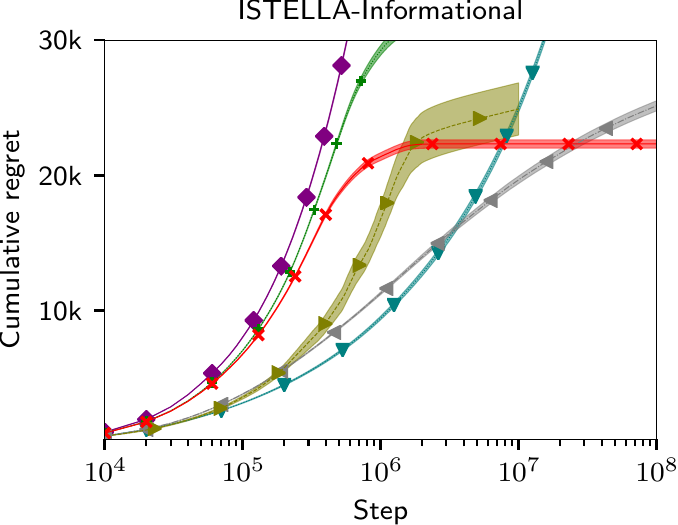}
	\vspace*{1mm}
	\caption{
		Effect  of the level of noise on cumulative regret in click feedback (left column: perfect configuration and right column: informational configuration). 
		The results are averaged over $50$ independent runs. (Contrast with figures in Figure~\ref{fig:large}.)
	}
	\label{fig:clicks}	
\end{figure}
To address \textbf{RQ3}, we run \ac{MergeDTS} in the perfect, navigational and informational configurations (see Section~\ref{sec:clicksimulation}). 
As discussed in Section~\ref{sec:clicksimulation}, the perfect configuration is the easiest one for dueling bandit algorithms to reveal the best ranker, while the informational configuration is the hardest. 
We report the results of the perfect and informational configurations in Figure~\ref{fig:clicks}. 
For comparison, we refer readers to plots in Figure~\ref{fig:large} for the results of navigational configuration. 

On the MSLR and ISTELLA datasets, in all three configurations, \ac{MergeDTS} with the chosen parameters outperforms the baselines, and the gaps get larger as click feedback gets noisier.
The results also show that Self-Sparring is severely affected by the level of noise. 
This is because Self-Sparring estimates the Borda score~\cite{savage} of a ranker and, in our experiments, the noisier click feedback is, the closer the Borda scores are to each other, making it harder for Self-Sparring to identify the winner. 

Results on the Yahoo dataset disagree with results on the MSLR dataset. 
On the Yahoo dataset, \ac{MergeDTS} is affected more severely by the level of noise than Self-Sparring. 
This is because of the existence of uninformative rankers as stated in \textbf{Assumption~1}. 
In noisier configurations, the gaps between uninformative and informative rankers are smaller, which results in the long time of comparisons for \ac{MergeDTS} to eliminate the uninformative rankers. 
Comparing those uninformative rankers leads to high regret. 

In summary, the performance of MergeDTS is largely affected when the gaps between rankers are small, which is consistent with our theoretical findings.

\begin{figure}
	\centering
	\includegraphics{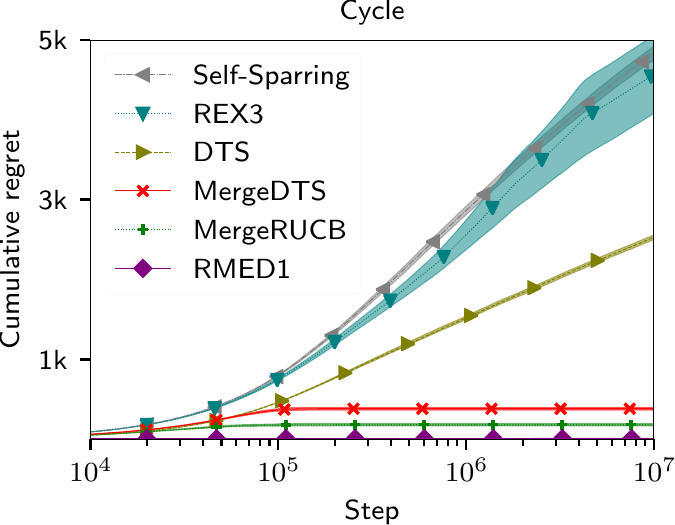}
	\includegraphics{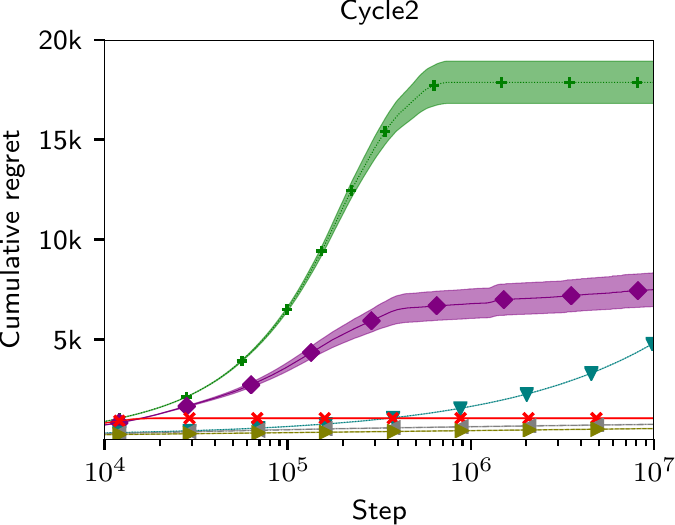}
	\vspace*{1mm}
	\caption{Cumulative regret in the cycle setup. The results are averaged over $100$ independent runs.}
	\label{fig:cycle}
\end{figure}	

\subsection{Cycle experiment}
\label{sec:cycle}

We address \textbf{RQ4} by running the algorithms that we consider on the Cycle and Cycle2 datasets introduced in Section~\ref{sec:dataset}.
Particularly, we have already observed that Self-Sparring performs well in some cases (see the above experiments and results), but we argue that Self-Sparring may perform poorly when a dueling bandit problem contains cyclic preference relationships. 
This has been identified as a point of grave concern in online evaluation~\cite{rcs}.
Therefore, in this section we assess how dueling bandit algorithms behave when a dueling bandit problem contains cycles.

In this section  we conduct experiments for 10 million steps and repeat $100$ times since Merge-style algorithms converge to the Condorcet ranker  within less than 1 million steps and running longer only increases the gaps between \ac{MergeDTS} and baselines. 

For the Cycle dataset (left plot in Figure~\ref{fig:cycle}), 
the cumulative regret of Self-Sparring is an order of magnitude higher than that of \ac{MergeDTS}, 
although it performs well in some cases (see the above experiments). 
As we discussed in Section~\ref{sec:dataset},  Self-Sparring chooses rankers based on their Borda scores  and when 
the Borda scores of different arms become close to each other as in the Cycle dataset, Self-Sparring may perform poorly. 
Also, we  notice that when the gaps in Borda scores of the Condorcet winner and other rankers are large, Self-Sparring performs well, as shown in the right plot in Figure~\ref{fig:cycle}. 

Other than Self-Sparring, we also notice that the other baselines performs quite differently on the two cyclic configurations. 
In the harder configuration of the two, the Cycle dataset, only 
\ac{RMED1} and \ac{MergeRUCB} outperform \ac{MergeDTS}.
\ac{RMED1} excludes rankers from consideration based on relative preferences between two rankers. 
And, in the Cycle dataset, the preferences between suboptimal rankers are large. 
Thus, \ac{RMED1} can easily exclude a ranker based on its relative comparison to another suboptimal ranker. 
For the Cycle2 dataset, where  the relative preferences between two rankers are small, \ac{RMED1} performs worse than \ac{MergeDTS}. 

\ac{MergeRUCB} also slightly outperforms \ac{MergeDTS} on the Cycle dataset. 
This can be explained as follows.
In the Cycle dataset, the preference gap between the Condorcet winner and suboptimal rankers is small (i.e., $0.01$), while the gaps between suboptimal rankers are relatively large (i.e., $1.0$).
Under this setup, \ac{MergeDTS} tends to use the Condorcet winner to eliminate suboptimal rankers in the final stage. 
On the other hand, \ac{MergeRUCB} eliminates a ranker by another ranker who beats it with the largest probability. 
So, \ac{MergeDTS} requires more comparisons to eliminate suboptimal rankers than \ac{MergeRUCB}.
However, the gap between \ac{MergeRUCB} and \ac{MergeDTS} is small.

\begin{figure}
	\centering
	\includegraphics{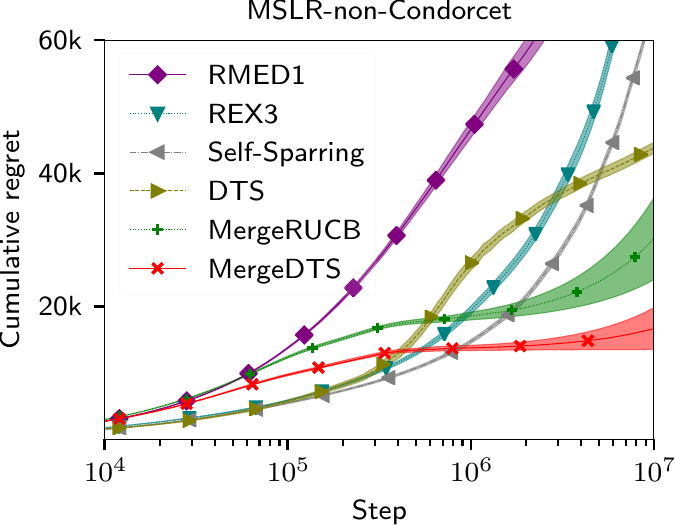}		
	\vspace*{1mm}
	\caption{Cumulative regret in the non-Condorcet setup. The results are averaged over $100$ independent runs.}
	\label{fig:copeland}
\end{figure}

\subsection{Beyond the Condorcet assumption}
\label{sec:noncondorcet}
To answer \textbf{RQ5}, we evaluate \ac{MergeDTS} on the MSLR-non-Condorcet dataset that does not contain a Condorcet winner. 
Instead, the dataset contains two Copeland winners and this dueling bandit setup is called the Copeland dueling bandit~\cite{ccb,dts}.
The Copeland winner is selected by the Copeland score 
$\zeta_i = \frac{1}{K-1} \sum_{k\neq i} \mathds{1}(p_{ik} > 1/2)$ that measures the number of rankers beaten by ranker $a_i$. 
The Copeland winner is defined as $\zeta^* = \max_{1\le i \le K}\zeta_i$. 
In the MSLR-non-Condorcet dataset, each Copeland winner beats $30$ other rankers. 
Specifically, one of the Copeland winners beats the other one but is beaten by a suboptimal ranker. 
In the Copeland dueling bandit setup, regret is computed differently from the Condorcet dueling bandit setup. 
Given a pair of rankers $(a_i, a_j)$, regret at step $t$  is computed as: 
\begin{equation}
r_t =  \zeta^* - 0.5(\zeta_i + \zeta_j).
\end{equation}
Among the considered algorithms, only \ac{DTS} can solve the Copeland dueling bandit problem and is the state-of-the-art Copeland dueling bandit algorithm. 
We conduct the experiment for $10$ million steps with which \ac{DTS} converges to the Copeland winners. 
And we run each algorithm $100$ times independently. 
The results are shown in Figure~\ref{fig:copeland}.

\ac{MergeDTS} has the lowest cumulative regret. 
However, in our experiments, we find that \ac{MergeDTS} eliminates the two Copeland winners one time out of $100$ individual repeats. 
In the other $99$ repeats, we find that \ac{MergeDTS} eliminates one of the two existing winners, which may not be ideal in practice.
Note that we evaluate \ac{MergeDTS} in a relatively easy setup, where only two Copeland winners are considered. 
For more complicated setups, where more than two Copeland winners are considered or the Copeland winners are beaten by several  suboptimal rankers, we speculate that \ac{MergeDTS} can fail more frequently. 
In our experiments, we do not evaluate \ac{MergeDTS} in the more complicated setups, because \ac{MergeDTS} is designed for the Condorcet dueling bandits and is only guaranteed to work under the Condorcet assumption. 
The answer to \textbf{RQ5} is that \ac{MergeDTS} may perform well for some easy setups that go beyond the Condorcet assumption without any guarantees. 

\subsection{Comparison to Multileaving}
\label{sec:multileaving}

\begin{figure*}
	\centering
	\includegraphics {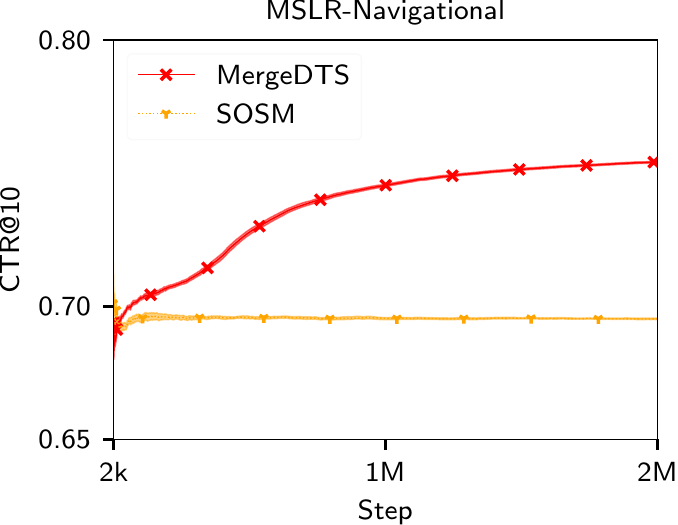}
	\includegraphics {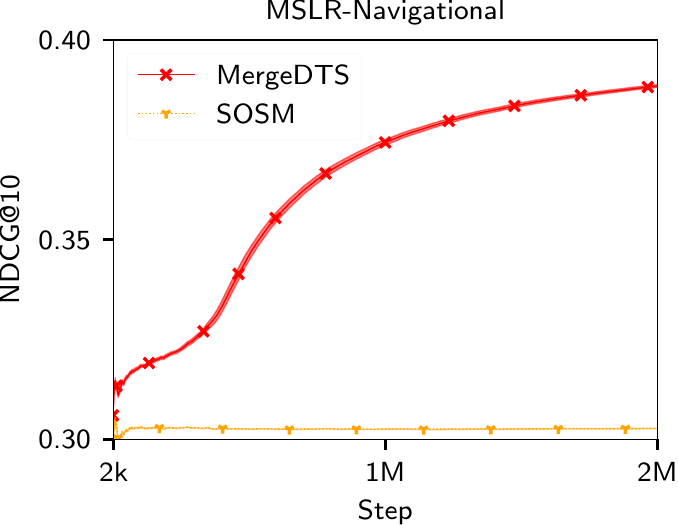}
	\vspace*{1mm}
	\includegraphics {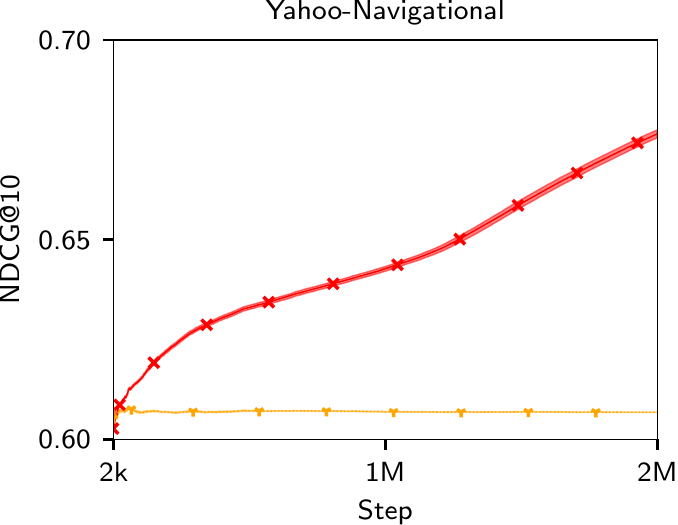}
	\includegraphics {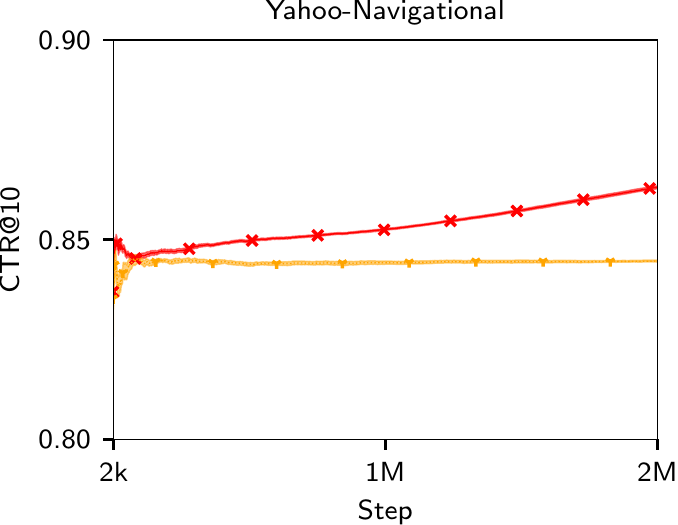}
	\vspace*{1mm}
	\includegraphics {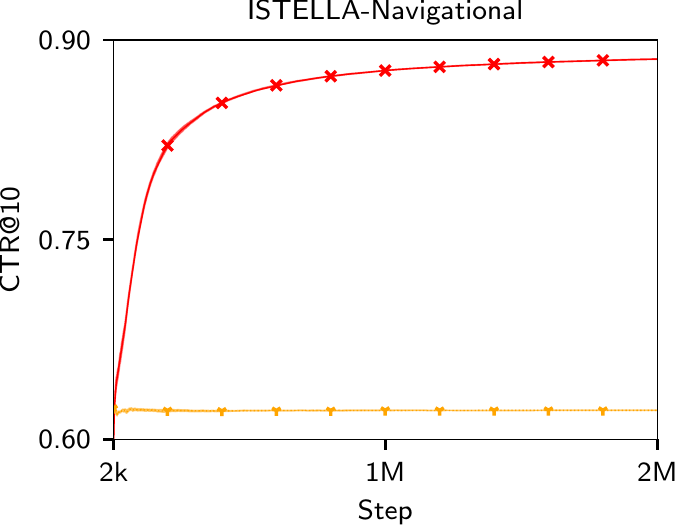}
	\includegraphics {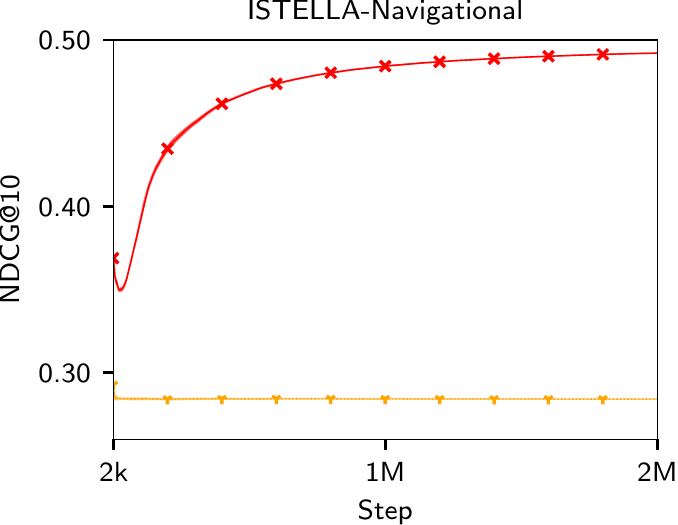}
	\vspace*{1mm}
	\caption{
			Left column: the average number of clicks received  at the top $10$ positions (higher is better). Right column: the average NDCG@10 of the interleaved or multileaved ranked lists (higher is better).  The results are averaged over $5$ independent runs. The shaded areas are $\pm$ standard error. 
	}
	\label{fig:multileave_ctr}
	\centering
\end{figure*}

To answer \textbf{RQ6}, we compare \ac{MergeDTS} together with \ac{PI}~\cite{hofmann-pi-2012} to \ac{SOSM}~\cite{Brost:2016},\footnote{The implementations of \ac{PI} and \ac{SOSM} are from this repository: \url{https://github.com/HarrieO/PairwisePreferenceMultileave}.} a multileaving method that is also designed for large-scale online ranker evaluation. 
Multileaving methods are designed to infer the preferences of rankers from multileaving comparisons. 
At each step, a multileaving method generates a ranked list from the lists produced by multiple rankers and infers the preferences of rankers based on the click feedback. 
Since the click feedback is noisy, the inferred preferences are also noisy. 
To get the best ranker, we need to run multileaving method a large number of steps. 
However, there is no multileaving method that estimates the confidence of the inferred preferences, thus we do not know when to stop the multileaving comparison.

In this experiment, we use the idea of \ac{UCB} to choose the rankers. 
Specifically, we run \ac{SOSM} as follows: 
\begin{inparaenum}[(1)]
	\item\label{item:1} 
	at each step, we compute \ac{UCB} estimators, defined in Table~\ref{tb:notation}, of the relative preferences of every pair of rankers;  
	\item \label{item:2}
	we consider rankers that have not been beaten by any other rankers based on the \ac{UCB} estimators; 
	\item \label{item:3}
	we use \ac{SOSM} to multileave the results of all considered rankers;
	\item \label{item:4}
	the multileaved list is shown to the user and click feedback is received; 
	\item\label{item:5} with the click feedback, we use \ac{SOSM} to infer the relative preferences of considered rankers and update the relative preferences to the matrix $\mathbf{W}$, defined in Table~\ref{tb:notation}.  
\end{inparaenum}
This approach is similar to \ac{MergeDTS} with only one batch. 
However, now we do not eliminate rankers. 
In Step~\ref{item:1}, to compute \ac{UCB} estimators, we choose $\alpha=0.51$. 
This is  a conservative setup because we may multileave the results of hundreds of rankers and the preferences inferred from a single multileaving comparison with hundreds of rankers can be rather noisy~\cite{Brost:2016}. 
For a fair comparison, in this experiment we also set $\alpha=0.51$ for \ac{MergeDTS}.  
Step~\ref{item:2} is from the Condorcet assumption that the Condorcet winner beats all others.

To conduct the multileaving experiments, we use the LEROT~\cite{lerot} online simulation setup instead of the proxy setup~\cite{mergerucb} used in other sections. 
We choose the MSLR, Yahoo  and ISTELLA datasets with the navigational configuration to simulate clicks and consider the top $10$ positions. 
These are standard setups in online ranker evaluation~\cite{Brost:2016,mergerucb,rucb,Schuth:2015:PMO:2766462.2767838,interleave,hofmann-reusing-2013}. 
Since we use simulated clicks in the experiments, we use the average number of clicks and the NDCG@10 of interleaved or multileaved results as metrics. 
We compute $NDCG@10$ of the ranked list $R$ as follows: 
	\begin{equation}
	NDCG@10(R) = \frac{DCG@10(R)}{DCG@10(R^*)}, \qquad DCG@10(R) = \sum_{i=1}^{10} \frac{2^{rel(R_i) }-1}{\log_2{(i+1)}}, 
	\end{equation}
where $R^*$ is the optimal ranking, $R_i$ is the $i$-th item in the ranked list $R$, and  $rel(R_i)$ is the relevance of the item $R_i$.  
Both metrics measure the quality of the displayed ranked lists. 
Higher values mean less loss of the user experience during online evaluation. 
The experiments are conducted for $2$M steps, with which \ac{MergeDTS} converges to the Condorcet winner, as shown in the above results. 
Since LETOR-based simulations are much slower than the proxy method~\cite{mergerucb}, the experiments are repeated (and averaged over) $5$ times. 

We report the number of clicks as CTR@$10$ in Figure~\ref{fig:multileave_ctr} and NDCG@10 in Figure~\ref{fig:multileave_ctr}.
\ac{MergeDTS}  outperforms \ac{SOSM} on both datasets. 
Particularly, the quality of the ranked lists identified by \ac{MergeDTS} increases quickly, which means that \ac{MergeDTS} can quickly eliminate suboptimal rankers. 
In the experiments, \ac{MergeDTS} finds the best ranker in  MSLR and ISTELLA datasets in less than $2$M steps. 
In the Yahoo dataset, \ac{MergeDTS} finds $2$ rankers in each repeat and in total $5$ repeats \ac{MergeDTS} finds $4$ rankers that contain the best ranker based on the NDCG@10. 
The curves of \ac{SOSM} are almost flat, which means that \ac{SOSM} always compares a large number of rankers. 
Specifically, in our experiments \ac{SOSM} compares all rankers in each step. 
Summarizing, the answer to \textbf{RQ6} is that \ac{MergeDTS} with \ac{PI} finds the best ranker faster than \ac{SOSM}.

\subsection{Parameter sensitivity }
\label{sec:parameter}

\begin{figure*}
	\centering
	\includegraphics {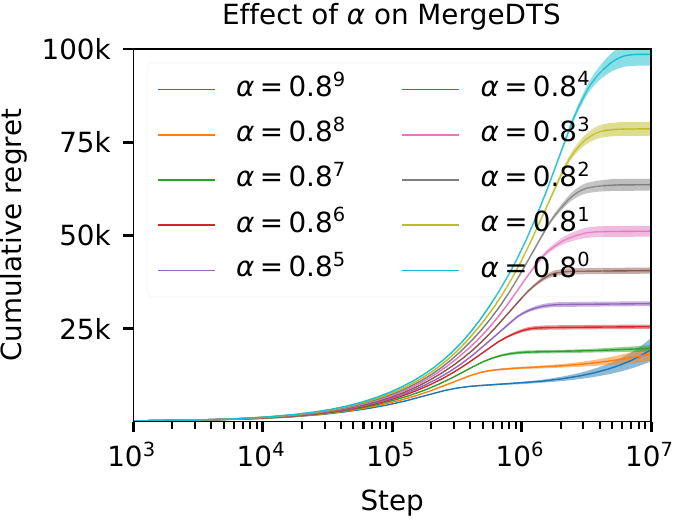}
	\includegraphics  {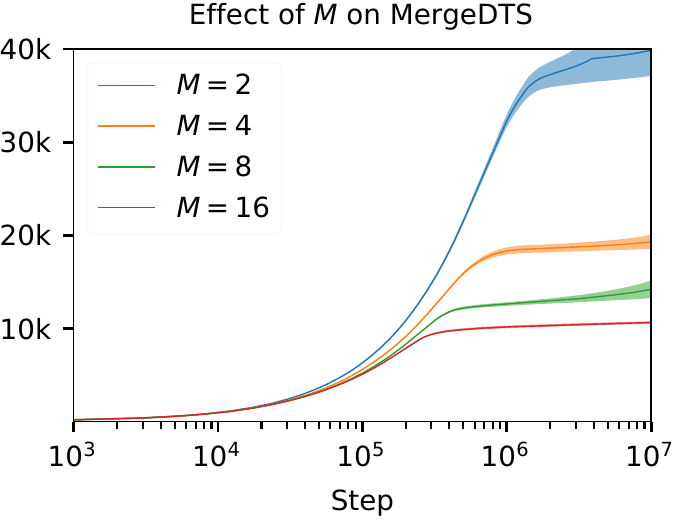}
	\includegraphics {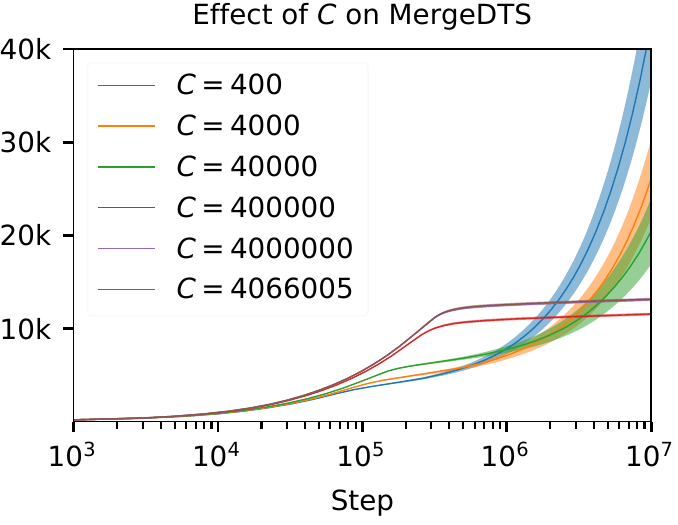}
	\caption{Effect of the parameters $\alpha$, $M$, and $C$ on the performance of \ac{MergeDTS} in the MSLR-Navigational setup. The results are averaged over $100$ independent runs. The shaded areas are $\pm$ standard error.}
	\label{fig:dependence}
\end{figure*}

We answer \textbf{RQ7} and analyze the parameter sensitivity of \ac{MergeDTS} using the setup described in Section~\ref{sec:parameters setup}. 
Since \ac{MergeDTS} converges to the Condorcet winner within $10$ million steps, we conduct the experiments with $10$ million steps and repeat $100$ times. 
Recall that we conduct the experiments in the MSLR-Navigational setup. 
The  results are reported in Figure~\ref{fig:dependence}. 
We also report the standard errors in the plots. 

The top-left plot in Figure~\ref{fig:dependence} shows the effect of the exploration parameter $\alpha$ on the performance of \ac{MergeDTS}.
First, lowering $\alpha$ can significantly increase the performance, e.g., the cumulative regret for $\alpha=0.8^4$ is about one third of the reward for $\alpha=1.0$ (which is close to the theoretically optimal value $\alpha=1.01$).
Second, as we decrease $\alpha$, the number of failures increases, where a failure is an event that \ac{MergeDTS} eliminates the Condorcet winner: 
with $\alpha=\{0.8^9, 0.8^8, 0.8^7\}$ we observe $10$, $4$, $1$ failures, respectively, and, thus, the cumulative regret increases linearly w.r.t.~$T$. 
Since in practice we do not want to eliminate the best ranker, we choose $\alpha = 0.8^6 \approx 0.2621$ in our experiments.

The top-right plot in Figure~\ref{fig:dependence} shows the effect of the batch size $M$.
The larger the batch size, the lower the regret. 
This can be explained as follows. 
The \ac{DTS}-based strategy uses the full local knowledge in a batch to choose the best ranker.  
A larger batch size $M$ provides more knowledge to \ac{MergeDTS} to make decisions, which leads to a better choice of rankers. 
But the time complexity of \ac{MergeDTS} is $O(TM^2)$, i.e., quadratic in the batch size.
Thus, for realistic scenarios we cannot increase $M$ indefinitely.
We choose $M=16$ as a tradeoff between effectiveness (cumulative regret) and efficiency (running time).

The bottom plot in Figure~\ref{fig:dependence} shows the dependency of \ac{MergeDTS} on $C$.
Similarly to the effect of $\alpha$, lower values of $C$ lead to lower regret, but also to a larger number of failures.
$C=\numprint{4000000}$ is the lowest value that does not lead to any failures, so we choose it in our experiments. 

In summary, the theoretical constraints on the parameters of MergeDTS are rather conservative. 
There is a range of values for the key parameters $\alpha$, $M$ and $C$, where the theoretical guarantees fail to hold, but where MergeDTS performs better than it would if we were to constrain ourselves only to values permitted by theory.
 

\section{Conclusion}
\label{sec:conclusion}

In this paper, we have studied the large-scale online ranker evaluation problem under the Condorcet assumption, which can be formalized as a $K$-armed  dueling bandit problem.
We have proposed a scalable version of the state-of-the-art Double Thompson Sampling algorithm, which we call \ac{MergeDTS}. 

Our experiments have shown that, by choosing the parameter values outside of the theoretical regime, \ac{MergeDTS} is considerably more efficient than \ac{DTS} in terms of computational complexity, and that it significantly outperforms the large-scale state-of-the-art algorithm \ac{MergeRUCB}. 
Furthermore, we have demonstrated the robustness of \ac{MergeDTS} when dealing with difficult dueling bandit problems containing cycles among the arms. 
We have also demonstrated that \ac{MergeDTS} can be applied to some of the dueling bandit tasks which do not contain a Condorcet winner. 
Lastly, we have shown that the performance of \ac{MergeDTS} is guaranteed if the parameter values fall within the theoretical regime. 

Several interesting directions for future work arise from this paper:  
\begin{inparaenum}[(1)]
	\item 
		In our experiments, we have shown that there is a large gap between theory and practice. 
		It will be interesting to study this gap and provide a tighter theoretical bound. 
	\item We only study dueling bandits in this paper. 
	We believe that it is interesting to study a generalization of \ac{MergeDTS}, as well as the theoretical analysis presented here, to the case of online ranker evaluation tasks with a multi-dueling setup.
	\item Since multi-dueling bandits also compare multiple rankers at each step based on relative feedback, it is an interesting direction to compare dueling bandits to multi-dueling bandits in the large-scale setup.
	\item We suspect that the \ac{UCB}-based elimination utilized in \ac{MergeDTS} is too conservative, it might be that more recent minimum empirical divergence based techniques~\cite{cwrmed} may be leveraged to speed up the elimination of the rankers. 
	\item The feature rankers are chosen as arms in our experiments. 
	A more interesting and realistic way of choosing arms is to use well trained learning to rank algorithms. 
\end{inparaenum}

\section*{Code and data}
To facilitate reproducibility of the results in this paper, we are sharing the code and the data used to run the experiments in this paper at \url{https://github.com/chang-li/MergeDTS}.

\begin{acks}
We thank Artem Grotov, Rolf Jagerman, Harrie Oosterhuis, Christophe Van Gysel, and Nikos Voskarides for their helpful comments and technical support.

We also thank our editor and the anonymous reviewers for extensive comments and suggestions that helped us to improve the paper. 
\end{acks}

\bibliographystyle{ACM-Reference-Format}
\bibliography{references}

\end{document}